\documentclass[letterpaper, 11pt]{article}
\pagestyle{plain}                                                      
\setlength{\textwidth}{6.5in}     
\setlength{\oddsidemargin}{0in}   
\setlength{\evensidemargin}{0in}  
\setlength{\textheight}{8.9in}    
\setlength{\topmargin}{0in}       
\setlength{\headheight}{0in}      
\setlength{\headsep}{0in}         
\setlength{\footskip}{.5in}       
\setlength {\parskip}{3pt}                                             
\bibliographystyle{abbrv}                                           

\usepackage{appendix}
 
 

\newcommand{\comment}[1]{}
 
\usepackage{graphicx}
\usepackage{subfigure}
\usepackage{amsmath}
\usepackage{mdwlist}
\usepackage{pxfonts}
 
\usepackage{algorithm}

\usepackage{hyperref}

\usepackage{xspace}

\newcommand{\sv}{{\mathcal V}}
\newcommand{\se}{{\mathcal E}}

 
%

\newenvironment{proof}{\paragraph{\bf Proof:}}{\hspace*{\fill}\(\Box\)}

\newtheorem{theorem}{Theorem}

\newtheorem{claim}{Claim}

\newtheorem{definition}{Definition}

\newtheorem{lemma}{Lemma}

\def\noflash#1{\setbox0=\hbox{#1}\hbox to 1\wd0{\hfill}}

\newcommand{\scriptf}{\mathcal{F}}
\newcommand{\scripte}{\mathcal{E}}
\newcommand{\scriptv}{\mathcal{V}}

\begin{document}
\title{Parameter-independent Iterative Approximate Byzantine Consensus \footnote{\normalsize This research is supported in part by National Science Foundation award CNS 1059540
and Army Research Office grant W-911-NF-0710287. Any opinions, findings,
and conclusions or recommendations expressed here are those of the authors and do not
necessarily reflect the views of the funding agencies or the U.S. government.}}

\author{Lewis Tseng$^{1,3}$ ~~and~~ Nitin Vaidya$^{2,3}$\\~\\
 \normalsize $^1$ Department of Computer Science,\\
 \normalsize $^2$ Department of Electrical and Computer Engineering, 
 and\\ \normalsize $^3$ Coordinated Science Laboratory\\ \normalsize University of Illinois at Urbana-Champaign\\ \normalsize Email: \{ltseng3, nhv\}@illinois.edu~\\~\\Technical Report}

\date{August 23rd, 2012}
\maketitle

 
 
 
\setcounter{page}{1}

\begin{abstract}
In this work, we explore iterative approximate Byzantine consensus algorithms that do not make explicit use of the global parameter of the graph, i.e., the upper-bound on the number of faults, $f$.

\end{abstract}

\newpage

\section{Introduction}
\label{sec:intro}
We consider ``iterative'' algorithms for achieving approximate Byzantine consensus in synchronous point-to-point communication networks that are modeled by arbitrary {\em directed}\, graphs. 
The {\em iterative approximate Byzantine consensus} (IABC) algorithms of interest have the following properties:

\begin{itemize}
\item {\em Initial state} of each node is equal to a real-valued {\em input} provided to that node.
\item {\em Validity} condition: After each iteration of an IABC algorithm, the state of each fault-free node must remain in the {\em convex hull} of the states of the fault-free nodes at the end of the {\em previous} iteration.\footnote{See Section \ref{s_discussion} for a variation on the validity condition.}
\item {\em Convergence} condition: For any $\epsilon>0$, after a sufficiently large number of iterations, the states of the fault-free nodes are guaranteed to be within $\epsilon$ of each other.
\end{itemize}
In this paper, we are interested in {\em parameter-independent} algorithms that
do not require explicit knowledge of the upper bound on the number of faults to be tolerated.
In particular,
we introduce a specific parameter-independent IABC algorithm, named {\em Middle} Algorithm. We derive a necessary condition on the underlying communication graph under which the {\em Middle} algorithm can tolerate up to $f$ Byzantine faults. For graphs that satisfy this necessary condition, we show the correctness of {\em Middle} Algorithm, proving that our necessary condition is tight.

For a more thorough discussion on related work, please refer to our previous work \cite{vaidya_PODC12}.

\section{System Model}
\label{s_models}

{\em Communication model:}
The system is assumed to be {\em synchronous}. The communication network is modeled as a simple {\em directed} graph $G(\scriptv,\scripte)$, where $\scriptv=\{1,\dots,n\}$ is the set of $n$ nodes, and $\scripte$ is the set of directed edges between the nodes in $\scriptv$. 
With a slight abuse of terminology, we will use the terms {\em edge} and {\em link} interchangeably.
We assume that $n\geq 2$, since the consensus problem for $n=1$ is trivial. Node $i$ can reliably transmit messages to node $j$ if and only if the directed edge $(i,j)$ is in $\scripte$. Each node can send messages to itself as well, however, for convenience, we {exclude self-loops} from set $\scripte$. That is, $(i,i)\not\in\scripte$ for $i\in\scriptv$. 

For each node $i$, let $N_i^-$ be the set of nodes from which $i$ has incoming
edges. That is, $N_i^- = \{\, j ~|~ (j,i)\in \scripte\, \}$. Similarly, define $N_i^+$ as the set of nodes to which node $i$ has outgoing edges. That is, $N_i^+ = \{\, j ~|~ (i,j)\in \scripte\, \}$. Nodes in $N_i^-$ and $N_i^+$ are, respectively, said to be incoming and outgoing neighbors of node $i$. Since we exclude self-loops from $\scripte$, $i\not\in N_i^-$ and $i\not\in N_i^+$.  However, we note again that each node can indeed send messages to itself. \\

{\em Failure Model:}
We consider the Byzantine failure model, with up to $f$ nodes becoming faulty. A faulty node may {\em misbehave} arbitrarily. Possible misbehavior includes sending incorrect and mismatching (or inconsistent) messages to different neighbors. The faulty nodes may potentially collaborate with each other. Moreover, the faulty nodes are assumed to have a complete knowledge of the execution of the algorithm, including the states of all the nodes, contents of messages the other nodes send to each other, the algorithm specification, and the network topology.

\section{Middle Algorithm}
\label{sec:middle}

The {\em Middle} algorithm is an iterative approximate Byzantine consensus (IABC) algorithm, and its structure is similar to other algorithms studied in prior work \cite{AA_Dolev_1986, AA_nancy, vaidya_PODC12}.
Each node $i$ maintains state $v_i$, with $v_i[t]$ denoting the state of node $i$ at the {\em end}\, of the $t$-th iteration of the algorithm ($t \geq 0$). Initial state of node $i$, $v_i[0]$, is equal to the initial {\em input}\, provided to node $i$. At the {\em start} of the $t$-th iteration ($t>0$), the state of
node $i$ is $v_i[t-1]$. The {\em Middle} algorithm requires each node $i$ to perform the following three steps in iteration $t$, where $t>0$. Note that the faulty nodes may deviate from this specification. 

\vspace*{4pt}
\hrule
\vspace*{2pt}
{\bf Middle Algorithm}
\vspace*{2pt}\hrule

\begin{enumerate}

\item {\em Transmit step:} Transmit current state $v_i[t-1]$ on all outgoing edges.

\item {\em Receive step:} Receive values on all incoming edges. These values form vector $r_i[t]$ of size $|N_i^-|$.
 
When a fault-free node expects to receive a message from a neighbor but does not receive the message, the message value is assumed to be equal to some {\em default value}.

\item {\em Update step:}

\begin{itemize}
\item Sort the values in $r_i[t]$ in an increasing order with ties being broken arbitrarily,
and use the sorted order of values
to form a partition of nodes in $N_i^-$ into sets $B,M,T$ as follows:
(i) set $B$ contains nodes from whom the smallest $\lfloor |N_i^-|/3 \rfloor$
values in the sorted $r_i[t]$ are received, 
(ii) set $T$ contains nodes from whom the largest $\lfloor |N_i^-|/3 \rfloor$
values in the sorted $r_i[t]$ are received, and 
(iii) set $M$ contains the remaining nodes from whom the values in the ``middle'' of sorted $r_i[t]$ are received. That is, $M=N_i^--B-T$.
 \footnote{For sets $X$ and $Y$, $X-Y$ contains elements that are in $X$ but not in $Y$. That is, $X-Y=\{i~|~ i\in X,~i\not\in Y\}$.} 
Thus, $|M|=|N_i^-|-2\lfloor |N_i^-|/3 \rfloor$.

\item 
Let $w_j$ denote the value received from node $j \in M$. For convenience, define $w_i=v_i[t-1]$ to be the value node
$i$ ``receives'' from itself. Observe that if $j\in \{i\}\cup M$ is fault-free, then $w_j=v_j[t-1]$.

\item Define
\begin{eqnarray}
v_i[t] ~ = ~\sum_{j\in \{i\}\cup M} a_i \, w_j
\label{e_Z}
\end{eqnarray}
where
\[ a_i ~=~ \frac{1}{|M|+1}~=~ \frac{1}{|N_i^-|- 2\lfloor |N_i^-|/3 \rfloor+1}
\] 

The ``weight'' of each term on the right-hand side of
(\ref{e_Z}) is $a_i$, and these weights add to 1. Also, $0<a_i\leq 1$.

For future reference, let us define $\alpha$ as:
\begin{eqnarray}
\alpha = \min_{i\in \scriptv}~a_i
\label{e_alpha}
\end{eqnarray}

\end{itemize}
\end{enumerate}
\hrule

We now define $U[t]$ and $\mu[t]$, assuming that $\scriptf$ is the set of Byzantine faulty nodes, with the nodes in $\scriptv-\scriptf$ being  fault-free.

\begin{itemize}
\item $U[t] = \max_{i\in\scriptv-\scriptf}\,v_i[t]$. $U[t]$ is the largest state among the fault-free nodes at the end of the $t$-th iteration. Since the initial state of each node is equal to its input, $U[0]$ is equal to the maximum value of the initial input at the fault-free nodes.

\item $\mu[t] = \min_{i\in\scriptv-\scriptf}\,v_i[t]$. $\mu[t]$ is the smallest state among the fault-free nodes at the end of the $t$-th iteration. $\mu[0]$ is equal to the minimum value of the initial input at the fault-free nodes.
\end{itemize}

The {\em Middle} algorithm is correct if it satisfies the following conditions in the
presence of up to $f$ Byzantine faulty nodes:

\begin{itemize}
\item {\em Validity:} $\forall t>0,
~~\mu[t]\ge \mu[t-1]
~\mbox{~~and~~}~
~U[t]\le U[t-1]$

\item {\em Convergence:} $\lim_{\,t\rightarrow\infty} ~ U[t]-\mu[t] = 0$
\end{itemize}
The objective in this paper is to identify the necessary and sufficient
conditions for Middle algorithm to satisfy the above validity and convergence conditions for a given $G(\scriptv,\scripte)$.

\section{Necessary Condition}
\label{sec:necessity}

For the {\em Middle} algorithm to be correct, the network graph $G(\scriptv,\scripte)$ must satisfy the necessary condition proved in this section. We first define relations $\Rightarrow$ and $\not\Rightarrow$ that are used frequently in our discussion.

\begin{definition}
\label{def:absorb}
For non-empty disjoint sets of nodes $A$ and $B$,

\begin{itemize}
\item $A \Rightarrow B$ iff there exists a node $v\in B$ such that 

\begin{equation}
\label{eq:absorb}
\frac{|N_v^- \cap A|}{|N_v^-|} > \frac{1}{3}
\end{equation}

\item $A\not\Rightarrow B$ iff $A\Rightarrow B$ is {\em not} true. \\
\end{itemize}
\end{definition}

\begin{theorem}
\label{thm:nc}
Suppose that Middle Algorithm is correct in graph $G(\scriptv,\scripte)$ in the presence of up to $f$ Byzantine faults. Then, both the following conditions must be true:

\begin{itemize}
\item For every node $v \in \scriptv$, $|N_v^-| \geq 3f$.
\item Let sets $F,L,C,R$ form a partition\footnote{Sets $X_1,X_2,X_3,...,X_p$ are said to form a partition of set $X$ provided that (i) $\cup_{1\leq i\leq p} X_i = X$, and (ii) $X_i\cap X_j=\Phi$ if $i\neq j$.} of $\scriptv$, such that $L$ and $R$ are both non-empty, and $|F|\leq f$. Then, either $C\cup R\Rightarrow L$, or $L\cup C\Rightarrow R$.
\end{itemize}
\end{theorem}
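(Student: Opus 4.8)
The plan is to prove both bullets by contraposition: assuming the stated graph property fails, I exhibit an execution of Middle Algorithm with at most $f$ faulty nodes that violates validity or convergence, contradicting the assumed correctness. Throughout I use only that a Byzantine node may send arbitrary and mutually inconsistent values to its out-neighbors in each iteration; no richer misbehavior is needed. The overall shape is the standard one for IABC necessary conditions: design initial values and adversarial messages so that two nonempty sets of fault-free nodes stay ``frozen apart''.

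\emph{First bullet.} Suppose some node $v$ has $|N_v^-|<3f$, and consider first $|N_v^-|\ge 1$. Then $q:=\lfloor|N_v^-|/3\rfloor\le f-1$, so one can afford to corrupt $q+1\le\min(f,|N_v^-|)$ incoming neighbors of $v$. Set $v_v[0]=1$, give every fault-free incoming neighbor of $v$ the value $1$ too (everything else in $[0,1]$), so $U[0]=1$; at iteration $1$ let each of the $q+1$ faulty neighbors send $2$ to $v$. The top trim $T$ of $r_v[1]$ has size $q$ and can delete only $q$ of the $q+1$ values equal to $2$, while the bottom trim $B$ deletes only values equal to $1$; hence $M$ retains exactly one value $2$ together with values $1$, and with $w_v=1$, (\ref{e_Z}) gives $v_v[1]=1+\frac{1}{|M|+1}>1=U[0]$, violating validity. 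If instead $|N_v^-|=0$ (which forces $f\ge 1$), node $v$ is frozen at $v_v[0]$; put $v_v[0]=1$, all other fault-free nodes at $0$, and have the $\le f$ faulty nodes send a large negative value everywhere. Any node $w\ne v$ we must reason about has $|N_w^-|\ge 3f$ (otherwise the previous case already applies to $w$), so its $\le f$ faulty inputs and its single input equal to $1$ coming from $v$ are all trimmed, $w$ stays at $0$ forever, and $U[t]-\mu[t]=1\not\to 0$.

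\emph{Second bullet.} Let $F,L,C,R$ partition $\scriptv$ with $L,R$ non-empty, $|F|\le f$, and suppose both $C\cup R\not\Rightarrow L$ and $L\cup C\not\Rightarrow R$. By the first bullet we may assume $|N_u^-|\ge 3f$, hence $\lfloor|N_u^-|/3\rfloor\ge f$, for every $u$. Declare the nodes of $F$ faulty; give every node of $L$ the initial value $0$, every node of $R$ the initial value $1$, and every node of $C$ an arbitrary value in $[0,1]$. In every iteration let each faulty node send $-1$ to its out-neighbors in $L$, $2$ to its out-neighbors in $R$, and $0$ to its out-neighbors in $C$. I claim the invariant: for all $t\ge 0$, $v_i[t]=0$ for $i\in L$, $v_i[t]=1$ for $i\in R$, and $v_i[t]\in[0,1]$ for $i\in C$. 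Granting this, $\mu[t]=0$ and $U[t]=1$ for all $t$, so convergence fails, contradicting correctness.

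The invariant is proved by induction on $t$. For $v\in L$ and $t\ge 1$: in $r_v[t]$ the entries from $N_v^-\cap F$ equal $-1$ and number $\le|F|\le f\le\lfloor|N_v^-|/3\rfloor$; the entries from $N_v^-\cap(C\cup R)$ lie in $[0,1]$ and, by $C\cup R\not\Rightarrow L$, number $\le\lfloor|N_v^-|/3\rfloor$; the remaining entries come from $N_v^-\cap L$, equal $0$, and number at least $|N_v^-|-2\lfloor|N_v^-|/3\rfloor=|M|$. A short counting check then shows $B\supseteq N_v^-\cap F$, that $T$ absorbs all strictly positive entries, and that every node of $M$ has value $0$; with $w_v=v_v[t-1]=0$, (\ref{e_Z}) yields $v_v[t]=0$. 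The case $v\in R$ is symmetric, using $L\cup C\not\Rightarrow R$ and the value $2$. For $v\in C$, every $w_j$ with $j\in\{v\}\cup M$ is either a fault-free state, hence in $[0,1]$ by the inductive hypothesis, or the value $0$ sent by a faulty node, so the convex combination $v_v[t]$ stays in $[0,1]$. The one genuinely delicate point is exactly this counting step: one must verify that the two simultaneous size-$\lfloor|N_v^-|/3\rfloor$ trims remove \emph{precisely} the faulty-low class and the $(C\cup R)$-high class, so that the middle block lies entirely inside the block of value-$0$ entries — and this is where the two non-absorption hypotheses and the in-degree bound from the first bullet are consumed. The only other wrinkle is the degenerate in-degree-zero case, which must be handled ad hoc as above.
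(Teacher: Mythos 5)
Your proof is correct and takes essentially the same approach as the paper: both conditions are shown necessary by exhibiting the same adversarial executions (a faulty value surviving the trims at a node with $|N_v^-|<3f$, and the frozen $L$/$C$/$R$ configuration with faulty nodes pushing extreme values when both relations $C\cup R\Rightarrow L$ and $L\cup C\Rightarrow R$ fail), with the same counting against the two size-$\lfloor |N_v^-|/3\rfloor$ trims. The only deviations are cosmetic --- concrete values $-1,0,1,2$ in place of the paper's generic $x^-,x,X,X^+,Y$, and a WLOG reduction (every other node has in-degree $0$ or $\geq 3f$) that lets you settle the in-degree-zero sub-case purely by a convergence violation, where the paper instead splits into sub-cases and also invokes a validity violation under the alternative fault assignment in which $v$ itself is faulty.
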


\begin{proof}

\paragraph{\em Proof of first condition:}
The first condition is trivially true when $f=0$. Thus, let us now assume that $f\geq 1$.
Suppose by way of contradiction that there exists a node $i$ such that $|N_i^-| < 3f$. 
Consider two cases in iteration 1:

\begin{itemize}
\item $|N_i^-| = 0$:
Suppose that node $i$ has initial input of $X$, and all the remaining nodes have input $x$, where $x<X$. Since node $i$ has no incoming edges, clearly, $v_i[1]=X$.

Consider two cases:
\begin{itemize}
\item There exists a node $j\neq i$ such that $(i,j)\in\scripte$, and the in-degree of node $j$ is such that the value $X$ is {\em not} eliminated in the {\em Update} step, i.e., $|N_j^-| \leq 2$: In this case, $v_j[1]>x$ since $X>x$. However, in the event that node $i$ is actually faulty, $v_j[1]$ will not satisfy the validity condition, since the initial inputs at all the fault-free nodes are all $x$ (if node $i$ were to be faulty).

\item For each node $j\neq i$, either $(i,j)\not\in\scripte$, or 
$(i,j)\in\scripte$ but the value received from node $i$ is dropped at node $j$ during the {\em Update} step:
In this case, all the values that affect the new state of node $j$ are $x$, and
$v_j[1]=x$. It is easy to see that the same scenario will repeat in each iteration, violating convergence condition when all the nodes (including $i$) are fault-free ($v_i$ remains at $X$, and for each node $j \neq i, v_j$ remains at $x$).
\end{itemize}

\item $|N_i^-| \geq 1$:
Assume that $min(f,|N_i^-|)$ incoming neighbors of node $i$ are faulty, and that all the remaining nodes are fault-free. Let $F$ denote the set of faulty nodes.
Note that $|F|\geq 1$.

Let $R = \scriptv - \{i\}-F$. Consider the case when (i) each node in $R$ has input $x$, and (ii) node $i$ has input $X > x$. In the {\em Transmit} step of iteration 1, suppose that the faulty nodes in $F$ send a sufficiently large value
$Y$
(elaborated below) on outgoing links to node $i$, and send value $x$ on outgoing links to nodes in $R$. This behavior is possible since nodes in $F$ are faulty. Each fault-free node $k \in \scriptv - F$ sends $v_k[0]$ (its input) on all its outgoing links.

Since $|N_i^-|<3f$, set $M$ at node $i$ in iteration 1 contains at least one value received from a faulty incoming neighbor. Then it is easy to see that
the faulty nodes can choose $Y$ such that $v_i[1] > X$. Since $i$ is fault-free,
and $v_i[1]$ exceeds the initial input at all the fault-free nodes, the
validity condition is violated.
\end{itemize}
In all cases above, either validity or convergence is violated, contradicting the assumption that the {\em Middle} algorithm is correct in the given graph.

\paragraph{\em Proof of second condition:}
Since the first condition is already proved to be necessary, we assume that the graph satisfies that condition.
The proof for the second condition is also by contradiction.
Suppose that the second condition is violated, i.e., in $G$, there exists some partition $F,L,C,R$ such that $|C \cup R| \not\Rightarrow L$ and $|L \cup C| \not\Rightarrow R$. Thus, for any $i \in L$, $\frac{|N_i^- \cap (C \cup R)|}{|N_i^-|} \leq \frac{1}{3}$, and for any $j \in R$, $\frac{|N_j^- \cap (L \cup C)|}{|N_j^-|} \leq \frac{1}{3}$.

Also assume that the nodes in $F$ (if non-empty) are all faulty, and the nodes in $L,R,C$ are all fault-free. Note that the fault-free nodes are not aware of the true identity of the faulty nodes.

Consider the case when (i) each node in $L$ has initial input $x$, (ii) each node in $R$ has initial input $X$, such that $X > x$, and (iii) each node in $C$ (if non-empty) has an input in the interval $(x,X)$.

In the {\em Transmit} step of iteration 1, suppose that each faulty node in $F$ (if non-empty) sends $x^- < x$ on outgoing links to nodes in $L$, sends $X^+ > X$ on outgoing links to nodes in $R$, and sends some arbitrary value in interval $[x,X]$ on outgoing links to nodes in $C$ (if non-empty). This behavior is possible since nodes in $F$ are faulty. Note that $x^-<x<X<X^+$. Each fault-free node $k \in \scriptv-F$ sends $v_k[0]$ to nodes in $N_k^+$ in iteration 1.

Consider a node $i \in L$. In iteration 1, node $i$ receives $x^-$ from the nodes in $N_i^- \cap F$, $x$ from the nodes in $\{i\}\cup (N_i^- \cap L)$, and values in $(x,X]$ from the nodes in $N_i^- \cap (C \cup R)$. Then in the {\em Update} step, $|B| \geq f \geq |F|$ due to the first condition, i.e., $|N_i^-| \geq 3f$. Furthermore, set $T$ (calculated in the {\em Update} step at node $i$) contains all the values from $N_i^- \cap (C \cup R)$, since $|C \cup R| \not\Rightarrow L$, i.e., $\frac{|N_i^- \cap (C \cup R)|}{|N_i^-|} \leq \frac{1}{3}$,
and the values received from the nodes in $C\cup R$ are the largest values
in vector $r_i[1]$.
 Recall that in the {\em Update} step, node $i$ would eliminate sets $B$ and $T$, and the remaining values, i.e., values in $\{i\} \cup M$, are all $x$, and therefore, $v_i[1]$ will be set to $x$ as per (\ref{e_Z}).

Thus, $v_i[1] = x$ for each node $i\in L$. Similarly, we can show that $v_j[1] = X$ for each node $j \in R$. Now consider the nodes in set $C$ (if non-empty).
The initial state of nodes in $C$ is in $(x,X)$, and all the values received
from the neighbors are in $[x,X]$, therefore, their new state of the nodes
in $C$ will remain in $(x,X)$ when using the {\em Middle} algorithm
(since the node's own state is assigned a non-zero weight in (\ref{e_Z})).

The above discussion implies that, at the end of iteration 1, the following conditions hold true: (i) state of each node in $L$ is $x$, (ii) state of each node in $R$ is $X$, and (iii) state of each node in $C$ is in the interval $(x,X)$. These conditions are identical to the initial conditions listed previously. Then, by a repeated application of the above argument (proof by induction), it follows that for any $t \geq 0$, $v_i[t] = x$ for all $i \in L$, $v_j[t] = X$ for all $j \in R$ and $v_k[t]\in(x,X)$ for all $k\in C$.

Since $L$ and $R$ both contain fault-free nodes, the convergence requirement
is not satisfied. This is a contradiction to the assumption that a correct
iterative algorithm exists.\\
\end{proof}

\section{Sufficient Condition}
\label{sec:sufficiency}

In Theorems \ref{thm:validity} and \ref{thm:convergence} in this section, we prove that Middle Algorithm satisfies {\em validity} and {\em convergence} conditions, respectively, provided that $G(\scriptv,\scripte)$ satisfies the condition below, which matches the necessary condition stated in Theorem \ref{thm:nc}.\\

\noindent{\bf Sufficient condition:}
\begin{itemize}
\item For every node $v \in \scriptv$, $|N_v^-| \geq 3f$, and
\item Let sets $F,L,C,R$ form a partition of $\scriptv$, such that $L$ and $R$ are both non-empty, and $|F|\leq f$. Then, either $C\cup R\Rightarrow L$, or $L\cup C\Rightarrow R$.
\end{itemize}
The claim below follows immediately from the second condition above by
setting $C=\Phi$.
\begin{claim}
\label{claim:nc2}
Suppose that $G(\scriptv,\scripte)$ satisfies the {\em Sufficient} condition stated above. Let $\{F,L,R\}$ be a partition of $\scriptv$, such that  $L$ and $R$ are both non-empty and $|F|\leq f$. Then, either $L\Rightarrow R$ or $R\Rightarrow L$.
\end{claim}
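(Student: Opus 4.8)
The plan is to derive Claim~\ref{claim:nc2} as an immediate specialization of the second part of the \emph{Sufficient condition}. Given a partition $\{F,L,R\}$ of $\scriptv$ with $L,R$ both non-empty and $|F|\leq f$, I would simply take $C=\Phi$ and observe that then $\{F,L,C,R\}$ is a partition of $\scriptv$ in the sense required by the \emph{Sufficient condition} (the empty set can be one of the parts: the union condition and pairwise-disjointness condition are both still satisfied). Applying the second bullet of the \emph{Sufficient condition} to this partition yields that either $C\cup R\Rightarrow L$ or $L\cup C\Rightarrow R$.

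The remaining step is to rewrite these two disjuncts after substituting $C=\Phi$. Since $\Phi\cup R=R$ and $L\cup\Phi=L$, the statement $C\cup R\Rightarrow L$ becomes $R\Rightarrow L$, and $L\cup C\Rightarrow R$ becomes $L\Rightarrow R$. Here I should note that Definition~\ref{def:absorb} is stated for non-empty disjoint sets, and indeed $R$ and $L$ (and $L$ and $R$) are non-empty and disjoint because they come from a partition with both parts non-empty, so the relation $\Rightarrow$ is well-defined in each case. Thus either $L\Rightarrow R$ or $R\Rightarrow L$, which is exactly the conclusion of the claim.

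I do not expect any real obstacle here — this is a one-line corollary. The only point requiring a modicum of care is the convention that a ``partition'' is allowed to include the empty set as one of its parts; the footnote accompanying Theorem~\ref{thm:nc} defines a partition purely by the covering and disjointness properties, both of which are vacuously compatible with an empty part, so invoking the \emph{Sufficient condition} with $C=\Phi$ is legitimate.
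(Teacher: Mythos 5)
Your proposal is correct and is exactly the paper's argument: the claim is obtained by setting $C=\Phi$ in the second bullet of the \emph{Sufficient condition}, whereupon $C\cup R\Rightarrow L$ and $L\cup C\Rightarrow R$ reduce to $R\Rightarrow L$ and $L\Rightarrow R$. Your added remarks about the empty part being a legitimate member of a partition and about $\Rightarrow$ being well-defined for the non-empty disjoint sets $L$ and $R$ are sound but only make explicit what the paper leaves implicit.
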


\begin{theorem}
\label{thm:validity}
Suppose that $\scriptf$ is the set of Byzantine faulty nodes, and that $G(\scriptv, \scripte)$ satisfies the {\em sufficient} condition stated above.
Then Middle Algorithm satisfies the {\em validity} condition.
\end{theorem}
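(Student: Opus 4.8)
The plan is to reduce the validity condition to a single invariant: for every fault-free node $i$ and every iteration $t>0$, the updated state $v_i[t]$ is a weighted average, with strictly positive weights summing to $1$, of real numbers all lying in the interval $[\mu[t-1],\,U[t-1]]$. Granting this, we immediately get $\mu[t-1]\le v_i[t]\le U[t-1]$ for every fault-free $i$, and taking the minimum and maximum over $i\in\scriptv-\scriptf$ yields $\mu[t]\ge\mu[t-1]$ and $U[t]\le U[t-1]$, which is exactly validity. By (\ref{e_Z}) the weights are the $a_i$'s, which satisfy $0<a_i\le 1$ and $(|M|+1)a_i=1$, so the only thing to prove is that $w_j\in[\mu[t-1],\,U[t-1]]$ for every $j\in\{i\}\cup M$.

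For $j=i$ this is immediate: $i$ is fault-free and $w_i=v_i[t-1]$, which lies in $[\mu[t-1],\,U[t-1]]$ by the definitions of $\mu$ and $U$. For $j\in M$ I would run the standard ``trimming'' argument. Let $w_{j^*}$ be the largest value among $\{w_j : j\in M\}$, received from some node $j^*\in M$; by the sorted order used in the \emph{Update} step, every value received from a node in $T$ is at least $w_{j^*}$. Suppose, for contradiction, that $w_{j^*}>U[t-1]$. Every fault-free incoming neighbor $k$ of $i$ contributes $w_k=v_k[t-1]\le U[t-1]$ (and a missing message is, per the algorithm, only ever attributed to a sender that is faulty), so $j^*$ and every node in $T$ must be faulty. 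Since $B,M,T$ partition $N_i^-$ we have $j^*\notin T$ and, because there are no self-loops, $j^*\neq i$; the first part of the sufficient condition gives $|N_i^-|\ge 3f$, hence $|T|=\lfloor |N_i^-|/3\rfloor\ge f$. Then node $i$ would have at least $|T|+1\ge f+1$ faulty incoming neighbors, contradicting $|\scriptf|\le f$. Therefore $w_{j^*}\le U[t-1]$, and by the symmetric argument applied to the smallest value in $M$ together with $|B|=\lfloor |N_i^-|/3\rfloor\ge f$, every value received from $M$ is at least $\mu[t-1]$.

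It remains only to dispose of the degenerate case $N_i^-=\emptyset$ (possible only when $f=0$): then $M=\emptyset$ and $v_i[t]=v_i[t-1]\in[\mu[t-1],\,U[t-1]]$ trivially. In all cases $v_i[t]$ is a convex combination of numbers in $[\mu[t-1],\,U[t-1]]$ and hence lies in that interval, establishing the invariant and with it validity. The only real content is the counting step in the middle paragraph; the point to be careful about is the bookkeeping of ``bad'' received values --- that a fault-free node's own contribution is always in range, and that each value outside $[\mu[t-1],\,U[t-1]]$ (including any default value substituted for a missing message) can be charged to a distinct faulty incoming neighbor, so that the $\lfloor|N_i^-|/3\rfloor\ge f$ extreme values placed in $B$ and in $T$ genuinely absorb all of them. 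Note that no graph condition beyond $|N_v^-|\ge 3f$ is used here; Claim \ref{claim:nc2} and the partition condition are reserved for the convergence proof.
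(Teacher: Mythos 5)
Your proof is correct and follows essentially the same route as the paper: since $|N_i^-|\ge 3f$, the trimmed sets $B$ and $T$ each have size $\lfloor |N_i^-|/3\rfloor\ge f$, so every value surviving into $\{i\}\cup M$ lies in $[\mu[t-1],U[t-1]]$, and the weighted average in (\ref{e_Z}) then keeps $v_i[t]$ in that interval. The only cosmetic difference is that you argue by counting faulty neighbors at the extremes of $M$ (a contradiction argument), whereas the paper phrases the same fact as the surviving faulty values being sandwiched between fault-free ones.
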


\begin{proof}
Consider the $t$-th iteration, and any fault-free node $i\in\scriptv-\scriptf$.
Consider two cases:

\begin{itemize}
\item
$f=0$: In this case, all nodes must be fault-free, and $\scriptf=\Phi$. In (\ref{e_Z}) in Middle Algorithm, note that $v_i[t]$ is computed using states from the previous iteration at node $i$ and other nodes. By definition of $\mu[t-1]$ and $U[t-1]$, $v_j[t-1]\in [\mu[t-1],U[t-1]]$ for all fault-free nodes $j\in\scriptv-\scriptf=\scriptv$. Thus, in this case, all the values used in computing $v_i[t]$ are in the interval $[\mu[t-1],U[t-1]]$. Since $v_i[t]$ is computed as a  weighted average of these values, $v_i[t]$ is also within
$[\mu[t-1],U[t-1]]$.

\item $f>0$: Since $|N_i^-|\geq 3f$, 
$|r_i[t]| \geq 3f$. Thus set $T$ in the {\em Update} step contains at least the largest $f$ values from $r_i[t]$, and set $B$ contains at least the smallest $f$ values from $r_i[t]$. Since at most $f$ nodes are faulty, it follows that, either (i) the values received from the faulty nodes are all eliminated, or (ii) the values from the faulty nodes that still remain are between values received from two fault-free nodes. Thus, the remaining values in $r_i[t]$ -- that is, values received from nodes in set $M$ -- are all in the interval $[\mu[t-1],U[t-1]]$. Also, $v_i[t-1]$ is  in $[\mu[t-1],U[t-1]]$, as per the definition of $\mu[t-1]$ and $U[t-1]$. Thus $v_i[t]$ is computed as a  weighted average of values in $[\mu[t-1],U[t-1]]$, and, therefore, it will also be in $[\mu[t-1],U[t-1]]$.
\end{itemize}

Since $\forall i\in\scriptv-\scriptf$, $v_i[t]\in [\mu[t-1],U[t-1]]$, the validity condition is satisfied.
\end{proof}

\begin{definition}
For disjoint sets $A,B$, $in(A \Rightarrow B)$ denotes the set of all the nodes in $B$ that have at least $1/3$ of the incoming edges from nodes in $A$. More formally,
\[
in(A\Rightarrow B) = \left\{~v~|\,v\in B \mbox{\normalfont~and~}~\frac{|N_v^-\cap A|}{|N_v^-|}~> \frac{1}{3}~\right\}
\]
With an abuse of notation, when $A\not\Rightarrow B$, define $in(A\Rightarrow B)=\Phi$. \\
\end{definition}

\begin{definition}
\label{def:absorb_sequence}
For {\em non-empty disjoint} sets $A$ and $B$, set $A$ is said to {\em propagate to} set $B$ in $l$ steps, where $l>0$, if there exist sequences of sets $A_0,A_1,A_2,\cdots,A_l$ and $B_0,B_1,B_2,\cdots,B_l$ (propagating sequences) such that 

\begin{itemize}
\item $A_0=A$, ~~~~ $B_0=B$, ~~~~ $A_l = A \cup B$, ~~~~ $B_l=\Phi$, 
 ~~~~ $B_\tau \neq \Phi$ ~for~ $\tau<l$, ~~~~~ and
\item for $0\leq \tau\leq l-1$,
\begin{itemize}
\item $A_\tau\Rightarrow B_\tau$, 
\item $A_{\tau+1} = A_\tau\cup in(A_\tau\Rightarrow B_\tau)$, ~~and
\item $B_{\tau+1} = B_\tau - in(A_\tau\Rightarrow B_\tau)$
\end{itemize}
\end{itemize}
\end{definition}
Observe that $A_\tau$ and $B_\tau$ form a partition of $A\cup B$, and for $\tau<l$, $in(A_\tau\Rightarrow B_\tau)\neq \Phi$. Also, when set $A$ propagates to set $B$, the number of steps $l$ in the above definition is upper bounded by $n-1$. \\

\begin{lemma}
\label{lemma:must_absorb}
Assume that $G(\scriptv,\scripte)$ satisfies the {\em sufficient} condition
stated above. For any partition $A,B,F$ of $\scriptv$, where $A,B$ are both non-empty, and $|F|\leq f$, either  $A$ propagates to $B$, or  $B$ propagates to $A$.
\end{lemma}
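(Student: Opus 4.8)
The goal is to show that for any partition $A, B, F$ of $\scriptv$ with $A, B$ nonempty and $|F| \le f$, either $A$ propagates to $B$ or $B$ propagates to $A$, in the sense of Definition~\ref{def:absorb_sequence}. The natural approach is a greedy/iterative construction of the propagating sequences, with Claim~\ref{claim:nc2} as the engine that keeps the process alive, and a potential-function (monotonicity) argument to guarantee termination. First I would set up the iterative process: start with $A_0 = A$, $B_0 = B$, and at each step $\tau$, having a partition $A_\tau, B_\tau, F$ of $\scriptv$ with $B_\tau \neq \Phi$, apply Claim~\ref{claim:nc2} to the partition $\{F, A_\tau, B_\tau\}$ (valid since $A_\tau \supseteq A$ is nonempty and $B_\tau$ is nonempty). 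This yields either $A_\tau \Rightarrow B_\tau$ or $B_\tau \Rightarrow A_\tau$.

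**Handling the dichotomy.** The subtlety is that Claim~\ref{claim:nc2} may give us $B_\tau \Rightarrow A_\tau$ rather than the direction we want. The clean way to deal with this is to prove a ``one-directional'' version first: show that if at \emph{every} stage $\tau$ (for the sets built so far) we have $A_\tau \Rightarrow B_\tau$, then $A$ propagates to $B$; by symmetry, swapping the roles of $A$ and $B$ handles the other case. But this is not quite enough, because the alternative at a given step is between $A_\tau \Rightarrow B_\tau$ and $B_\tau \Rightarrow A_\tau$, and these need not be exclusive, nor does failure of one at step $\tau$ commit us globally. The right framing: attempt to build the $A$-to-$B$ propagating sequence greedily; if it ever gets stuck — i.e., we reach some $A_\tau, B_\tau$ with $B_\tau \neq \Phi$ but $A_\tau \not\Rightarrow B_\tau$ — then Claim~\ref{claim:nc2} forces $B_\tau \Rightarrow A_\tau$, and I would argue that from this point one can instead build a $B$-to-$A$ propagating sequence. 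Actually the cleanest statement is: either the greedy $A \to B$ construction succeeds, or at the first point of failure we have $B_\tau \Rightarrow A_\tau$ and then we restart a greedy $B \to A$ construction — but to make the restart work we need to know the $B \to A$ construction itself does not get stuck, which again reduces to the one-directional lemma. So I would structure it as: (1) prove the one-directional Lemma — ``if $X \Rightarrow Y$ and the sufficient condition holds, then $X$ propagates to $Y$'' — no wait, that over-claims; the hypothesis must be maintained inductively.

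**The core inductive lemma.** Here is the version I would actually prove and it is the heart of the argument: \emph{If $X, Y, F$ partition $\scriptv$ with $X, Y$ nonempty, $|F| \le f$, and for the greedily-generated sequence we always land in the case $X_\tau \Rightarrow Y_\tau$ whenever $Y_\tau \neq \Phi$ — equivalently, whenever $X_\tau \not\Rightarrow Y_\tau$ we must have already reached $Y_\tau = \Phi$} — then $X$ propagates to $Y$. Combined with Claim~\ref{claim:nc2}, the dichotomy is: run the greedy process for $A \to B$; at each step either $A_\tau \Rightarrow B_\tau$ (continue) or, by Claim~\ref{claim:nc2} applied to $\{F, A_\tau, B_\tau\}$, $B_\tau \Rightarrow A_\tau$. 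In the latter event I switch to proving $B$ propagates to $A$, starting a fresh sequence from $B_0 = B, A_0 = A$; but here too each step offers the same dichotomy, so I need the process to be confluent. The correct resolution is that the two directions cannot both stall: if the $A\to B$ process stalls at $(A_\tau, B_\tau)$ with $B_\tau \neq \Phi$, then $B_\tau \Rightarrow A_\tau$, and since $A_\tau \supseteq A \supsetneq \Phi$, the $B\to A$ process, begun afresh, has its first step valid; its progress is monotone and it cannot stall for the symmetric reason, so it terminates.

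**Termination.** The termination/progress argument is standard: at each step $\tau < l$, $in(A_\tau \Rightarrow B_\tau) \neq \Phi$ (this is exactly the content of $A_\tau \Rightarrow B_\tau$), so $|B_{\tau+1}| < |B_\tau|$ strictly, while $A_{\tau+1} \supseteq A_\tau$; since $B_0 = B$ is finite, after at most $|B| \le n-1$ steps we reach $B_l = \Phi$, at which point $A_l = A \cup B$, satisfying all requirements of Definition~\ref{def:absorb_sequence}. The monotone shrinking of $B_\tau$ is what makes the greedy process well-defined and finite.

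**Main obstacle.** The genuine difficulty is the confluence issue sketched above: making rigorous the claim that the $A \to B$ and $B \to A$ greedy processes cannot both get stuck, i.e., that the dichotomy of Claim~\ref{claim:nc2} — which is applied afresh to each intermediate partition — actually produces a globally consistent propagation in one of the two directions. The subtlety is that $A \Rightarrow B$ and $B \Rightarrow A$ are not mutually exclusive, so a naive ``if not this then that'' case split does not immediately commit the whole run to one direction. I expect the resolution to be: observe that if the $A \to B$ run stalls at $(A^*, B^*)$ with $B^* \neq \Phi$, Claim~\ref{claim:nc2} gives $B^* \Rightarrow A^*$; then one shows that a $B \to A$ run initialized at $(B, A)$ reaches exactly this configuration $(B^*, A^*)$ (reading $A^*, B^*$ with roles swapped) along a valid propagating sequence — in fact one can argue the stall configuration itself certifies that $B$ propagates to $A$ by running the greedy $B \to A$ process, which by the same monotonicity cannot stall before exhausting $A^* $, because any stall of \emph{that} process would, again by Claim~\ref{claim:nc2}, force $A_\sigma \Rightarrow B_\sigma$ contradicting the definition of a stall. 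In short, the two processes are dual, at most one can stall with a nonempty target, and whichever does not stall gives the required propagation. Once this confluence is pinned down, the rest is the routine monotonicity bookkeeping above.
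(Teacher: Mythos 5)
Your high-level structure (greedy construction, strict shrinkage of the target set for termination, and the recognition that the real difficulty is the ``confluence'' of the two directions) matches the shape of the problem, but the step you flag as the main obstacle is exactly the step your argument does not actually close, and the fix you sketch does not work. A stall of the $B\to A$ run at some $(B_\sigma,A_\sigma)$ means, by definition, $B_\sigma\not\Rightarrow A_\sigma$; applying Claim~\ref{claim:nc2} there yields $A_\sigma\Rightarrow B_\sigma$, which is \emph{not} a contradiction --- the two relations are not mutually exclusive, and $(A_\sigma,B_\sigma)$ is not a state of the $A\to B$ run (in the $B\to A$ run the moving set grows from $B$ and the target shrinks from $A$, so these sets are unrelated to the $A_\tau,B_\tau$ of the forward run). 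For the same reason your claim that the fresh $B\to A$ run ``reaches exactly the stall configuration $(B^*,A^*)$'' has the containments backwards: $B^*\subseteq B$ and $A^*\supseteq A$, so the $B\to A$ run never passes through it. With only Claim~\ref{claim:nc2} (the $C=\Phi$ instance) invoked at each intermediate partition, nothing prevents both greedy runs from stalling, so the proposal as written has a genuine gap.

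The missing ingredient is the full four-set sufficient condition with a nonempty middle set $C$. The paper first proves the one-directional lemma in the correct form: if $B\not\Rightarrow A$ (a fixed hypothesis about the \emph{original} sets), then $A$ propagates to $B$. The point is that at stage $\tau$ one applies the sufficient condition to the partition $F$, $L=A$, $C=A_\tau-A$, $R=B_\tau$; since $R\cup C=B\not\Rightarrow A=L$, the condition forces $L\cup C\Rightarrow R$, i.e.\ $A_\tau\Rightarrow B_\tau$, so this run provably never stalls --- the invariant is carried by the original pair $(A,B)$, not re-derived from Claim~\ref{claim:nc2} at each step. Then, in the main lemma, if the greedy $A\to B$ run stalls at $(A_k,B_k)$ with $B_k\neq\Phi$, one has $A_k\not\Rightarrow B_k$, so by the one-directional lemma $B_k$ propagates to $A_k$; finally a separate containment argument ($P_i\subseteq R_i$, $S_i\subseteq Q_i$ by induction) is needed to transfer this to a propagating sequence that starts exactly at $(B,A)$, as Definition~\ref{def:absorb_sequence} requires. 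Your proposal omits both the use of the nonempty-$C$ condition and this transfer step, which are the two substantive parts of the proof.
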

The proof of Lemma \ref{lemma:must_absorb} is similar to the proof in our prior
work \cite{vaidya_PODC12} -- the proof is included in Appendix \ref{sec:pf_absorb_condition}.

The lemma below states that the interval to which the states at all the
fault-free nodes are confined shrinks after a finite number of iterations of Middle Algorithm. Recall that $U[t]$ and $\mu[t]$ (defined in Section \ref{sec:middle}) are the maximum and minimum over the states at the fault-free nodes at the end of the $t$-th iteration.

\begin{lemma}
\label{lemma:bounded_value}
Suppose that $G(\scriptv, \scripte)$ satisfies the {\em sufficient} condition stated above, and $\scriptf$ is the set of Byzantine faulty nodes. Moreover, at the end of the $s$-th iteration of Middle Algorithm, suppose that the fault-free nodes in $\scriptv-\scriptf$ can be partitioned into non-empty sets $R$ and $L$ such that (i) $R$ propagates to $L$ in $l$ steps, and (ii) the states of nodes
in $R$ are confined to an interval of length $\leq \frac{U[s]-\mu[s]}{2}$. Then, with the {\em Middle} algorithm, 
\begin{eqnarray}
U[s+l]-\mu[s+l]~\leq~\left(1-\frac{\alpha^l}{2}\right)(U[s] - \mu[s])
\label{e:convergence:1}
\end{eqnarray}
where $\alpha$ is as defined in (\ref{e_alpha}).
\end{lemma}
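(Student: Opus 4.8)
The plan is to prove the bound by an induction that follows the propagating sequences witnessing ``$R$ propagates to $L$''. Write $M := U[s]$, $m := \mu[s]$, and set $b := \max_{j\in R} v_j[s]$ and $a := \min_{j\in R} v_j[s]$, so that by hypothesis $b-a \le (M-m)/2$; by Theorem~\ref{thm:validity} we also have $[a,b]\subseteq[m,M]$, and $U[s+\tau]\le M$, $\mu[s+\tau]\ge m$ for all $\tau\ge 0$. Let $R_0,\dots,R_l$ and $L_0,\dots,L_l$ be propagating sequences for $R$ to $L$, so $R_0=R$, $L_0=L$, $R_l=R\cup L=\scriptv-\scriptf$, $L_l=\Phi$, and $R_{\tau+1}=R_\tau\cup in(R_\tau\Rightarrow L_\tau)$. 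I will run two symmetric inductions along this \emph{same} sequence: an ``upper'' one bounding how large the states in $R_\tau$ can be, and a mirror ``lower'' one bounding how small they can be.

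The upper claim is that for $0\le\tau\le l$ and every $i\in R_\tau$,
\[
v_i[s+\tau]~\le~M-\alpha^{\tau}\,(M-b).
\]
The base case $\tau=0$ is immediate, since $v_j[s]\le b$ for $j\in R$. For the inductive step, consider the two kinds of nodes in $R_{\tau+1}$. If $i\in R_\tau$, then $v_i[s+\tau+1]$ is the weighted average, with weights $a_i$, of $v_i[s+\tau]$ and the $|M|$ values in its set $M$; by the argument in the proof of Theorem~\ref{thm:validity} all of the latter lie in $[m,M]$, and $v_i[s+\tau]\le M-\alpha^{\tau}(M-b)$ by the inductive hypothesis, so a one-line computation using $a_i(|M|+1)=1$ and $a_i\ge\alpha$ yields $v_i[s+\tau+1]\le M-\alpha^{\tau+1}(M-b)$. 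If instead $i\in in(R_\tau\Rightarrow L_\tau)$, then $|N_i^-\cap R_\tau|>|N_i^-|/3\ge\lfloor|N_i^-|/3\rfloor=|B|=|T|$ (the trim sizes at $i$), so $N_i^-\cap R_\tau$ cannot sit inside $B$ alone; hence either some $j\in N_i^-\cap R_\tau$ survives into $M$ — and its value $v_j[s+\tau]\le M-\alpha^\tau(M-b)$ pulls the weighted average down by the required amount — or $N_i^-\cap R_\tau\subseteq B\cup T$ with some $j\in N_i^-\cap R_\tau$ in $T$, in which case \emph{every} value in $M$ is $\le v_j[s+\tau]\le M-\alpha^\tau(M-b)$, which is stronger still; either way the arithmetic closes using $a_i(|M|+1)=1$ and $a_i|M|\ge a_i\ge\alpha$ (here $|M|\ge 1$ because an absorbed node has incoming edges). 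The lower claim, $v_i[s+\tau]\ge m+\alpha^\tau(a-m)$ for $i\in R_\tau$, follows by the mirror-image argument along the same propagating sequence.

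Taking $\tau=l$: since $R_l=\scriptv-\scriptf$, the upper claim gives $U[s+l]\le M-\alpha^l(M-b)$ and the lower claim gives $\mu[s+l]\ge m+\alpha^l(a-m)$. Subtracting,
\[
U[s+l]-\mu[s+l]~\le~(M-m)-\alpha^l\big((M-b)+(a-m)\big)~=~(M-m)-\alpha^l\big((M-m)-(b-a)\big),
\]
and since $b-a\le(M-m)/2$ the bracketed quantity is $\ge(M-m)/2$, so $U[s+l]-\mu[s+l]\le(1-\alpha^l/2)(U[s]-\mu[s])$, as claimed.

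The main obstacle is the inductive step for $i\in in(R_\tau\Rightarrow L_\tau)$: from the single inequality $|N_i^-\cap R_\tau|/|N_i^-|>1/3$ one must extract that a value small enough to help is still present in $M$ after trimming, handling the annoying case in which all of $R_\tau$'s values received at $i$ are trimmed away — which is precisely where the observation ``then one of them lies in $T$, so it dominates all of $M$'' is needed. A secondary point requiring care is that the upper and lower inductions must be carried out over one and the same pair of propagating sequences, so that both simultaneously reach all of $\scriptv-\scriptf$ at step $l$; everything else is routine bookkeeping with the weights $a_i$ and the validity bounds of Theorem~\ref{thm:validity}.
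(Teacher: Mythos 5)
Your proposal is correct and follows essentially the same route as the paper: an induction along the propagating sequence establishing a lower bound $v_i[s+\tau]-\mu[s]\geq\alpha^\tau(x-\mu[s])$ and a symmetric upper bound for nodes in $R_\tau$, with the absorbed-node case split exactly as in the paper (either a value from $N_i^-\cap R_\tau$ survives into the middle set, or all are trimmed and one of them lands in the dominating trimmed set, bounding everything that survives), followed by the same final arithmetic using $X-x\leq\frac{U[s]-\mu[s]}{2}$. The only difference is cosmetic: you invoke the validity-proof argument inline where the paper packages it as Lemmas \ref{lemma:psi} and \ref{lemma:Psi}.
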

The proof of the above lemma is presented in Appendix \ref{sec:pf_lemma:bounded_value}.

\begin{theorem}
\label{thm:convergence}
Suppose that $\scriptf$ is the set of Byzantine faulty nodes, and that $G(\scriptv, \scripte)$ satisfies the {\em sufficient} condition stated above. Then the {\em Middle} algorithm satisfies the {\em convergence} condition.
\end{theorem}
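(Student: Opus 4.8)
The plan is to derive convergence by combining the two lemmas of this section with the validity condition (Theorem~\ref{thm:validity}): I will show that the gap $U[t]-\mu[t]$ contracts by a fixed factor $\rho<1$ over every block of at most $n-1$ consecutive iterations, and conclude that it tends to $0$.

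First I would record the consequence of validity: since $\mu[t]\ge\mu[t-1]$ and $U[t]\le U[t-1]$ for all $t>0$, the sequence $U[t]-\mu[t]$ is non-negative and non-increasing; in particular, if it ever equals $0$, it stays $0$, and convergence holds trivially from that iteration onward.

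The core step is the following. Fix an iteration $s$ with $U[s]>\mu[s]$, let $m=(\mu[s]+U[s])/2$, and partition the fault-free nodes $\scriptv-\scriptf$ into $L=\{\,i\in\scriptv-\scriptf:v_i[s]\le m\,\}$ and $R=\{\,i\in\scriptv-\scriptf:v_i[s]>m\,\}$. I would check that both sets are non-empty (the node attaining $\mu[s]$ lies in $L$, and the node attaining $U[s]$ lies in $R$ because $U[s]>m$), and that the states in each of $L$ and $R$ are confined to an interval of length $\le (U[s]-\mu[s])/2$ (namely $[\mu[s],m]$ and $(m,U[s]]$ respectively). Applying Lemma~\ref{lemma:must_absorb} to the partition $L,R,\scriptf$ (valid since $|\scriptf|\le f$), either $L$ propagates to $R$ or $R$ propagates to $L$, in some number of steps $l$ with $1\le l\le n-1$. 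In either direction, the propagating set is confined to an interval of length $\le (U[s]-\mu[s])/2$ --- this is exactly why the midpoint partition is used --- so Lemma~\ref{lemma:bounded_value} (after swapping the roles of $L$ and $R$ if the propagation goes from $L$ to $R$) yields $U[s+l]-\mu[s+l]\le\bigl(1-\tfrac{\alpha^{l}}{2}\bigr)(U[s]-\mu[s])$. Since $0<\alpha\le 1$ and $l\le n-1$, we have $\alpha^{l}\ge\alpha^{n-1}$, hence $U[s+l]-\mu[s+l]\le\rho\,(U[s]-\mu[s])$ where $\rho:=1-\tfrac{\alpha^{n-1}}{2}$ satisfies $\tfrac12\le\rho<1$ and depends only on the graph.

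Finally I would iterate this estimate. Set $t_0=0$; given $t_k$, if $U[t_k]=\mu[t_k]$ we are done, and otherwise the core step gives an index $t_{k+1}=t_k+l_k$ with $t_{k+1}\ge t_k+1$ and $U[t_{k+1}]-\mu[t_{k+1}]\le\rho\,(U[t_k]-\mu[t_k])$. By induction $U[t_k]-\mu[t_k]\le\rho^{k}(U[0]-\mu[0])$ and $t_k\ge k$, so $t_k\to\infty$. Given $\epsilon>0$, choose $k$ with $\rho^{k}(U[0]-\mu[0])<\epsilon$; then for every $t\ge t_k$, monotonicity of the gap gives $U[t]-\mu[t]\le U[t_k]-\mu[t_k]<\epsilon$. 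Hence $\lim_{t\to\infty}\bigl(U[t]-\mu[t]\bigr)=0$, which is the convergence condition.

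I do not expect a serious obstacle here, since all the real work is already contained in Lemmas~\ref{lemma:must_absorb} and~\ref{lemma:bounded_value}. The only care needed is bookkeeping: verifying non-emptiness of $L$ and $R$ and the length-$\le\tfrac12(U[s]-\mu[s])$ confinement, and correctly matching the hypotheses of Lemma~\ref{lemma:bounded_value} (stated with a designated set ``$R$'' that must both propagate and be confined) to whichever of the two propagation directions Lemma~\ref{lemma:must_absorb} produces --- which is handled by the symmetry of the midpoint partition.
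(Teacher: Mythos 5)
Your proposal is correct and follows essentially the same route as the paper: the midpoint partition of the fault-free nodes, Lemma~\ref{lemma:must_absorb} to obtain a propagating set confined to half the gap, Lemma~\ref{lemma:bounded_value} for the per-block contraction, and iteration plus validity to conclude. Your only (harmless) deviation is replacing the paper's product of factors $\prod_j\bigl(1-\tfrac{\alpha^{\tau_j-\tau_{j-1}}}{2}\bigr)$ by the uniform bound $\rho=1-\tfrac{\alpha^{n-1}}{2}$ via $l\le n-1$, which the paper also uses implicitly.
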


\begin{proof}
Our goal is to prove that, given any $\epsilon>0$, there exists $\tau$ such that
\begin{equation}
U[t]-\mu[t] \leq \epsilon ~~~\forall t\geq \tau
\end{equation}

Consider $s$-th iteration, for some $s\geq 0$. If $U[s]-\mu[s]=0$, then the algorithm has already converged, and the proof is complete, with $\tau=s$ (recall that we have already proved that the algorithm satisfies the validity condition).

Now, consider the case when $U[s]-\mu[s]>0$. Partition $\scriptv-\scriptf$ into two subsets, $A$ and $B$, such that, for each node $i\in A$,  $v_i[s]\in \left[\mu[s], \frac{U[s]+\mu[s]}{2}\right)$, and for each node $j\in B$,
$v_j[s] \in \left[\frac{U[s]+\mu[s]}{2}, U[s]\right]$. By definition of $\mu[s]$ and $U[s]$, there exist fault-free nodes $i$ and $j$ such that $v_i[s]=\mu[s]$ and $v_j[s]=U[s]$. Thus, sets $A$ and $B$ are both non-empty. By Lemma \ref{lemma:must_absorb}, one of the following two conditions must be true:

\begin{itemize}
\item Set $A$ propagates to set $B$. Then, define $L=B$ and $R=A$. The states of all the nodes in $R=A$ are confined within an interval of length strictly less than $\frac{U[s]+\mu[s]}{2} - \mu[s] \leq \frac{U[s]-\mu[s]}{2}$.

\item Set $B$ propagates to set $A$. Then, define $L=A$ and $R=B$. In this case, states of all the nodes in $R=B$ are confined within an interval of length less than or equal to $U[s]-\frac{U[s]+\mu[s]}{2} \leq \frac{U[s]-\mu[s]}{2}$. 

\end{itemize}

In both cases above, we have found non-empty sets $L$ and $R$ such that (i) $L,R$ is a partition of $\scriptv-\scriptf$, (ii) $R$ propagates to $L$, and (iii) the states in $R$ are confined to an interval of length less than or equal to $\frac{U[s]-\mu[s]}{2}$. Suppose that $R$ propagates to $L$ in $l(s)$ steps, where $l(s)\geq 1$. Then by Lemma~\ref{lemma:bounded_value},

\begin{eqnarray}
U[s+l(s)]-\mu[s+l(s)] \leq \left( 1-\frac{\alpha^{l(s)}}{2}\right)(U[s] - \mu[s])
\label{e_t}
\end{eqnarray}

In the {\em Middle} algorithm, observe that $a_i > 0$ for all $i$. Therefore, $\alpha$ defined in (\ref{e_alpha}) is $> 0$. Then, $n-1 \geq l(s)\geq 1$ and $0<\alpha\leq 1$; hence, $0\leq \left( 1-\frac{\alpha^{l(s)}}{2}\right)<1$.

Let us define the following sequence of iteration indices:
\begin{itemize}
\item $\tau_0 = 0$,
\item for $i>0$, $\tau_i = \tau_{i-1} + l(\tau_{i-1})$, where $l(s)$ for any given $s$ was defined above.
\end{itemize}

If for some $i$,  $U[\tau_i]-\mu[\tau_i]=0$, then since the algorithm is already proved to satisfy the validity condition, we will have $U[t]-\mu[t]=0$ for all $t\geq \tau_i$, and the proof of convergence is complete.

Now, suppose that $U[\tau_i]-\mu[\tau_i]\neq 0$ for the values of $i$ in the
analysis below. By repeated application of the argument leading to (\ref{e_t}), we can prove that, for $i\geq 0$,

\begin{eqnarray}
U[\tau_i]-\mu[\tau_i] \leq \left( \Pi_{j=1}^i\left( 1-\frac{\alpha^{\tau_j-\tau_{j-1}}}{2}\right)\right)~(U[0] - \mu[0])
\end{eqnarray}

For a given $\epsilon$,
by choosing a large enough $i$, we can obtain
\[
\left(\Pi_{j=1}^i\left( 1-\frac{\alpha^{\tau_j-\tau_{j-1}}}{2}\right)\right)~(U[0] - \mu[0]) \leq \epsilon
\]
and, therefore,
\begin{eqnarray}
U[\tau_i]-\mu[\tau_i] \leq  \epsilon
\end{eqnarray}
For $t\geq \tau_i$, by validity of the {\em Middle} algorithm, it follows that
\[
U[t]-\mu[t] \leq
U[\tau_i]-\mu[\tau_i] \leq  \epsilon
\]
This concludes the proof.
\end{proof}

\section{Discussion}
\label{s_discussion}

The results in this report can be easily extended to the following version of the validity condition:
\begin{itemize}
\item {\em Validity}: $\forall t$, $\mu[t]\geq \mu[0]$ and $U[t]\leq U[0]$
\end{itemize}
This validity condition is weaker than the condition satisfied by the {\em Middle} algorithm, therefore, the algorithm satisfies this validity condition as well.
Also, it should be easy to see that our necessary condition also holds under the above validity condition (the proof remains essentially unchanged).

In our analysis here, we assumed that the system is synchronous, and messages sent in each iteration are delivered in the same iteration. That is, the state update in the $t$-th iteration uses neighbors' states at the end of the $(t-1)$-th iteration. The results in this paper can be extended to the case when messages may be delayed such that the latest state available from a neighbor may be from iteration $(t-B)$, for some finite $B>0$. In this case, our original validity condition will need to be modified to require that the state of the fault-free nodes at the end of any iteration remains in the convex hull of the fault-free nodes $B$ iterations ago. 

\comment{====================
Here, we also present a result without showing all the details. We will soon show the complete discussion. We explore middle algorithm in Erd\"{o}s-R\'{e}nyi Random Graphs \cite{erdos_book}. Consider the model, $G_{n, p} = (\scriptv, \scripte)$, where $\scriptv$ contains all the nodes and $|\scriptv| = n$. For every distinct pair of nodes $i, j \in \scriptv$, $(i, j) \in \scripte$ with an independent probability $p$.

Now, the following theorem can be shown.

\begin{theorem}
For every constant positive integer $f$, the threshold function is

\begin{equation}
\label{eq:t}
t=\frac{\ln n + 2f\ln\ln n}{n}
\end{equation}

\end{theorem}

That is, if $p > t$, then the resulting $G_{n, p}$ satisfies the condition presented in this work with high probability, i.e., Middle algorithm solves approximate consensus in the graph; otherwise, $G_{n, p}$ does not satisfy the condition with high probability, i.e., the adversary can find a strategy to break the Middle algorithm.
}

We now state a result without proof. Further details will be presented elsewhere. Consider an Erd\"{o}s-R\'{e}nyi random graphs $G_{n,p}(\sv,\se)$, where $\sv$ contains $n$ vertices, and edge $(i,j)\in\se$ with probability $p$ independently for each $(i,j)$. For large $n$, this random graph satisfies the condition in Theorem \ref{thm:nc} with high probability if and only if $p=\Omega(t)$ where $t$ is a threshold dependent on $n$ and $f$.

\section{Summary}

This paper introduces a parameter-independent iterative algorithm, the {\em Middle} algorithm, that solves the approximate Byzantine consensus problem. The {\em Middle} algorithm does not explicitly use the global parameter of the graph, i.e., the upper-bound on the number of faults, $f$. We prove {\em tight} necessary and sufficient conditions for the correctness of the {\em Middle} algorithm that tolerates up to $f$ Byzantine faults in directed graphs.

\appendix

\section{Proof of Lemma 1}
\label{sec:pf_absorb_condition}

To prove Lemma \ref{lemma:must_absorb}, we first prove the following Lemma.

\begin{lemma}
\label{lemma:absorb_condition}
Assume that $G(\scriptv,\scripte)$ satisfies the {\em Sufficient condition}.
Consider a partition $A,B,F$ of $\scriptv$ such that
$A$ and $B$ are non-empty, and $|F|\leq f$.
If $B \not\Rightarrow A$, then set $A$ propagates to set $B$.
\end{lemma}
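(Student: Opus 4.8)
The plan is to argue by contradiction: assuming $B\not\Rightarrow A$ but that $A$ does \emph{not} propagate to $B$, I will exhibit a partition of $\scriptv$ that violates the second bullet of the \emph{Sufficient condition}. The key observation is that, once $A_0=A$ and $B_0=B$ are fixed, the propagating sequence of Definition~\ref{def:absorb_sequence} is forced: $A_{\tau+1}=A_\tau\cup in(A_\tau\Rightarrow B_\tau)$ and $B_{\tau+1}=B_\tau-in(A_\tau\Rightarrow B_\tau)$ depend only on $(A_\tau,B_\tau)$ and $G$. So I would run this canonical absorption process from $(A,B)$. Because $in(A_\tau\Rightarrow B_\tau)\neq\Phi$ precisely when $A_\tau\Rightarrow B_\tau$, the remainder $B_\tau$ strictly shrinks whenever $B_\tau\neq\Phi$ and $A_\tau\Rightarrow B_\tau$; hence after finitely many steps (at most $|B|$) the process either reaches $B_l=\Phi$ --- in which case $A_0,\dots,A_l$ and $B_0,\dots,B_l$ satisfy every clause of Definition~\ref{def:absorb_sequence}, so $A$ propagates to $B$ and there is nothing to prove --- or it \emph{stalls} at some pair $(A^\ast,B^\ast)$ with $B^\ast\neq\Phi$ but $A^\ast\not\Rightarrow B^\ast$. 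By the assumption that $A$ does not propagate to $B$ we must be in the stalled case, and it remains only to derive a contradiction from it.

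Throughout the process $A_\tau,B_\tau$ partition $A\cup B=\scriptv-F$ and the recursion only moves nodes from the $B$-side to the $A$-side, so $A\subseteq A_\tau$ and $B_\tau\subseteq B$ (cf.\ the remark after Definition~\ref{def:absorb_sequence}). Applied to the stalled pair, $A^\ast\cup B^\ast=A\cup B$ and $A^\ast\cap B^\ast=\Phi$, which forces the set identity $A^\ast\setminus A=B\setminus B^\ast$. I would then feed the partition $F,\ L=B^\ast,\ C=A^\ast\setminus A,\ R=A$ of $\scriptv$ into the second bullet of the Sufficient condition: it is a legitimate partition (with $C=\Phi$ allowed, which occurs when the process stalls at step $0$), $L=B^\ast$ and $R=A$ are both non-empty, and $|F|\le f$. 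The condition gives $C\cup R\Rightarrow L$ or $L\cup C\Rightarrow R$. The first alternative says $A^\ast\Rightarrow B^\ast$, because $C\cup R=(A^\ast\setminus A)\cup A=A^\ast$ --- impossible in the stalled state. Hence $L\cup C\Rightarrow R$; but $L\cup C=B^\ast\cup(A^\ast\setminus A)=B^\ast\cup(B\setminus B^\ast)=B$ while $R=A$, so this says $B\Rightarrow A$, contradicting the hypothesis. Therefore the process cannot stall, and $A$ propagates to $B$.

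The step I expect to be the crux is choosing that partition and recognising the two bookkeeping identities $C\cup R=A^\ast$ and $L\cup C=B$. An induction on the number of absorption steps, or on $|B|$, is the tempting first approach but does not seem to close: a node just absorbed into $A_{\tau+1}$ may still have more than a third of its in-neighbours inside the current remainder $B_{\tau+1}$, so the property ``$B_\tau\not\Rightarrow A_\tau$'' is not preserved along the process and one cannot straightforwardly recurse. The identity $A^\ast\setminus A=B\setminus B^\ast$ is exactly what collapses $B^\ast\cup(A^\ast\setminus A)$ back to the original $B$ and makes the violated instance of the graph condition appear in a single step, so I would state and justify it carefully. (Incidentally, only the second bullet of the Sufficient condition is used here; the in-degree bound $|N_v^-|\ge 3f$ is not needed for this lemma. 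This lemma, together with its mirror image, is what will yield the dichotomy of Lemma~\ref{lemma:must_absorb}.)
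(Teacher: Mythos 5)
Your proof is correct and is essentially the paper's argument: the paper runs the same canonical absorption sequence and, at each step with $B_\tau\neq\Phi$, applies the second bullet of the sufficient condition to the partition $F$, $L=A$, $C=A_\tau-A=B-B_\tau$, $R=B_\tau$ (your choice with $L,R$ swapped, which is immaterial by symmetry), using exactly your identity that $C\cup R$ collapses to $B$ so that $B\not\Rightarrow A$ forces $A_\tau\Rightarrow B_\tau$. The only cosmetic difference is that the paper phrases this as an induction guaranteeing the process never stalls (invoking Claim~\ref{claim:nc2} for the first step, i.e., your $C=\Phi$ case), whereas you package it as a single contradiction at a hypothetical stall point.
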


\begin{proof}
Since $B\not\Rightarrow A$, by Claim \ref{claim:nc2},
$A\Rightarrow B$.

Define $A_0=A$ and $B_0=B$.
Now, for a suitable $l>0$, we will build propagating sequences $A_0,A_1,\cdots A_l$
and $B_0,B_1,\cdots B_l$ inductively.
\begin{itemize}
\item Recall that $A=A_0$ and $B=B_0\neq \Phi$. Since $A\Rightarrow B$,
$in(A_0\Rightarrow B_0)\neq \Phi$.
Define $A_1=A_0\cup in(A_0\Rightarrow B_0)$
and 
$B_1=B_0-in(A_0\Rightarrow B_0)$.

If $B_1=\Phi$, then $l=1$, and we have found the propagating sequence
already.

If $B_1\neq \Phi$, then define $L=A=A_0$, $R=B_1$ and $C=A_1-A=B-B_1$.
Note that $B=R\cup C$, $A_1=L\cup C$, and $L,C,R,F$ form a partition of the set of nodes.
Since $B\not\Rightarrow A$, $R\cup C\not\Rightarrow L$. Therefore,
by the {\em Sufficient condition}, $L\cup C\Rightarrow R$. That is, $A_1\Rightarrow B_1$.

\item 
For increasing values of $i\geq 0$,
given $A_i$ and $B_i$, where $B_i\neq\Phi$, by following steps similar to the previous
item, we can obtain
$A_{i+1}=A_0\cup in(A_i\Rightarrow B_i)$
and 
$B_{i+1}=B_i-in(A_i\Rightarrow B_i)$,
such that either $B_{i+1}=\Phi$ or $A_{i+1}\Rightarrow B_{i+1}$.
\end{itemize}
In the above construction, $l$ is the smallest index such that
$B_l=\Phi$.
\end{proof}

\paragraph{\bf\large Proof of Lemma \ref{lemma:must_absorb}}

\begin{proof}
Consider two cases:
\begin{itemize}
\item $A\not\Rightarrow B$: Then by Lemma \ref{lemma:absorb_condition} above,
$B$ propagates to $A$, completing the proof.

\item $A\Rightarrow B$: In this case, consider two sub-cases:
\begin{itemize}
\item {\em $A$ propagates to $B$}: The proof in this case is complete.

\item {\em $A$ does not propagate to $B$}:
Recall that $A\Rightarrow B$. Since $A$ does not propagate to $B$,
propagating sequences defined in Definition~\ref{def:absorb_sequence}
do not exist in this case. More precisely, there must exist $k>0$,
and sets $A_0,A_1,\cdots,A_k$ and $B_0,B_1,\cdots,B_k$,
such that:
\begin{itemize}
\item $A_0=A$ and $B_0=B$, and
\item for $0\leq i\leq k-1$,
\begin{list}{}{}
\item[o] $A_i\Rightarrow B_i$,
\item[o] $A_{i+1} = A_i\cup in(A_i\Rightarrow B_i)$, and
\item[o] $B_{i+1} = B_i - in(A_i\Rightarrow B_i)$.
\end{list}
\item $B_{k}\neq \Phi$ \underline{and} $A_{k}\not\Rightarrow B_{k}$.
\end{itemize}
The last condition above violates the requirements for $A$ to propagate
to $B$.

Now, $A_{k}\neq \Phi$, $B_k\neq \Phi$, and $A_k,B_k,F$ form
a partition of $\scriptv$. Since $A_{k}\not\Rightarrow B_{k}$,
by Lemma \ref{lemma:absorb_condition} above,
$B_k$ propagates to $A_k$.

Given that $B_k\subseteq B_0 = B$, $A=A_0\subseteq A_k$, and $B_k$ propagates
to $A_k$, now we prove that $B$ propagates to $A$. 

Recall that $A_i$ and $B_i$ form a partition of $\scriptv-F$.

Let us define $P=P_0=B_k$ and $Q=Q_0=A_k$. Thus, $P$ propagates to $Q$.
Suppose that $P_0,P_1,...P_m$ and $Q_0,Q_1,\cdots,Q_m$ are
the propagating sequences in this case, with $P_i$ and $Q_i$ forming
a partition of $P\cup Q = A_k\cup B_k=\scriptv-F$. \\

Let us define $R=R_0=B$ and $S=S_0=A$.
Note that $R,S$ form a partition of $A\cup B=\scriptv-F$.
Now, $P_0=B_k\subseteq B=R_0$ and $S_0=A\subseteq A_k =Q_0$.
Also, $R_0-P_0$ and $S_0$ form a partition of $Q_0$.

\begin{itemize}
\item
Define $P_1 = P_0\cup (in(P_0\Rightarrow Q_0))$, and $Q_1 = \scriptv - F - P_1 = Q_0 - (in(P_0\Rightarrow Q_0))$. Also,
$R_1 = R_0\cup (in(R_0\Rightarrow S_0))$, and $S_1 = \scriptv - F - R_1 = S_0 - (in(R_0\Rightarrow S_0))$.

Since $R_0-P_0$ and $S_0$ are a partition of $Q_0$,
the nodes in $in(P_0\Rightarrow Q_0)$ belong to one of these
two sets. Note that $R_0-P_0\subseteq R_0$.
Also, $S_0 \cap in(P_0\Rightarrow Q_0) \subseteq in(R_0\Rightarrow S_0)$.
Therefore, it follows that $P_1 = P_0\cup (in(P_0\Rightarrow Q_0))
\subseteq R_0\cup (in(R_0\Rightarrow S_0)) = R_1$.

Thus, we have shown that, $P_1\subseteq R_1$. Then it follows that
$S_1\subseteq Q_1$.
 \\
\item For $0\leq i<m$, let us define $R_{i+1}=R_i\cup in(R_i\Rightarrow 
S_i)$ and $S_{i+1} = S_i - in(R_i\Rightarrow S_i)$. Then following an
argument similar to the above case, we can inductively show that,
$P_i\subseteq R_i$ and $S_i\subseteq Q_i$.
Due to the assumption on the length of the propagating
sequence above, $P_m=P\cup Q=\scriptv-F$ and $Q_m=\Phi$.
Thus, there must exist $r\leq m$, such that for $i<r$, $R_i\neq \scriptv-F$, and $R_r=\scriptv-F$
and $S_r=\Phi$.

The sequences $R_0,R_1,\cdots,R_r$ and $S_0,S_1,\cdots,S_r$ form
propagating sequences, proving that $R=B$ propagates to $S=A$. 
\end{itemize}
\end{itemize}
\end{itemize}
\end{proof}

\section{Proof of Lemma 2}
\label{sec:pf_lemma:bounded_value}

We first present two additional lemmas (using
the notation in Middle Algorithm).

\begin{lemma}
\label{lemma:psi}
Suppose that $\scriptf$ is the set of faulty nodes,
and that $G(\scriptv,\scripte)$ satisfies the ``sufficient condition''
stated in Section \ref{sec:sufficiency}.
Consider node $i\in\scriptv-\scriptf$.
Let $\psi\leq \mu[t-1]$. Then, for $j\in \{i\}\cup M$,
\[
v_i[t] - \psi  ~\geq~  a_i ~ (w_j - \psi)
\]
where $w_j$ is the value received by node $i$ from node $j$ in the $t$-th
iteration.
Specifically, for fault-free $j\in \{i\}\cup M$,
\[
v_i[t] - \psi  ~\geq~  a_i ~ (v_j[t-1] - \psi)
\]
\end{lemma}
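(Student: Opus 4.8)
The plan is to prove this directly from the update rule (\ref{e_Z}) in Middle Algorithm, exploiting the fact that the weights $a_i$ are all equal and sum to $1$, together with the conclusion (established in the proof of Theorem~\ref{thm:validity}) that all the values $w_k$ with $k\in\{i\}\cup M$ lie in $[\mu[t-1],U[t-1]]$ when $i$ is fault-free.

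\begin{proof}
Consider a fault-free node $i\in\scriptv-\scriptf$ in the $t$-th iteration, and fix $j\in\{i\}\cup M$. By (\ref{e_Z}),
\[
v_i[t] ~=~ \sum_{k\in\{i\}\cup M} a_i\, w_k ~=~ a_i\, w_j ~+~ \sum_{k\in(\{i\}\cup M)\setminus\{j\}} a_i\, w_k .
\]
Since the weights $a_i$ sum to $1$ over the $|M|+1$ terms, subtracting $\psi$ gives
\[
v_i[t]-\psi ~=~ a_i\,(w_j-\psi) ~+~ \sum_{k\in(\{i\}\cup M)\setminus\{j\}} a_i\,(w_k-\psi).
\]
In the proof of Theorem~\ref{thm:validity} it was shown that, for a fault-free node $i$, every value $w_k$ with $k\in\{i\}\cup M$ lies in $[\mu[t-1],U[t-1]]$ (the own value $v_i[t-1]$ by definition of $\mu,U$; the values from $M$ because sets $B$ and $T$ each contain at least $f$ values, so any surviving faulty value is sandwiched between two fault-free values). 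Hence $w_k\ge \mu[t-1]\ge\psi$ for each such $k$, so $w_k-\psi\ge 0$, and since $a_i>0$ every term in the sum on the right is nonnegative. Dropping that sum yields $v_i[t]-\psi\ge a_i\,(w_j-\psi)$, which is the first claimed inequality. For fault-free $j\in\{i\}\cup M$ we have $w_j=v_j[t-1]$ (by the observation in the Update step, and by definition $w_i=v_i[t-1]$), giving the second inequality.
\end{proof}

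The only mild subtlety is that the nonnegativity of the dropped terms requires knowing $w_k\ge\mu[t-1]$ for \emph{all} $k\in\{i\}\cup M$, not just for fault-free $k$; this is exactly what the validity argument supplies, so I would invoke that argument rather than reprove it. Everything else is a one-line rearrangement of the equal-weight average, so there is no real obstacle here; the lemma is essentially a restatement of validity with the weight of one distinguished term pulled out explicitly.
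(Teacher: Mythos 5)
Your proof is correct and follows essentially the same route as the paper: establish that $w_k \ge \mu[t-1] \ge \psi$ for every $k \in \{i\}\cup M$ (the paper re-derives this in-line from $|N_i^-|\ge 3f$ and the fact that $B$ contains a fault-free value below any surviving faulty one, rather than citing the validity argument, but it is the same sandwich idea and there is no circularity since Theorem~\ref{thm:validity} does not depend on this lemma), then rewrite the equal-weight average as $\sum_k a_i(w_k-\psi)$ and drop the nonnegative terms other than the one for $j$.
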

\begin{proof}
In (\ref{e_Z}) in Middle Algorithm, for each $j\in\{i\}\cup M$, consider two cases:
\begin{itemize}
\item $j$ is faulty-free: Then, either $j=i$ or  $j\in M\cap (\scriptv-\scriptf)$.
In this case, $w_j=v_j[t-1]$. Therefore,
$\mu[t-1] \leq w_j\leq U[t-1]$.
\item $j$ is faulty: In this case, $f$ must be non-zero (otherwise,
all nodes are fault-free).  
By Theorem \ref{thm:nc}, $|N_i^-|\geq 3f$.
Then it follows that, in step 2 of the {\em Middle} algorithm, $|B|\geq f$, and set $B$ contains the state of at least one fault-free node,
say $k$.
This implies that $v_k[t-1] \leq w_j$.
This, in turn, implies that
$\mu[t-1] \leq w_j.$
\end{itemize}
Thus, for all $j\in \{i\}\cup M$, we have $\mu[t-1] \leq w_j$.
Therefore,
\begin{eqnarray}
w_j-\psi\geq 0 \mbox{\normalfont~for all~} j\in\{i\} \cup M
\label{e_algo_1}
\end{eqnarray}
Since weights in (\ref{e_Z}) in Middle Algorithm add to 1, we can re-write that equation
as,
\begin{eqnarray}
v_i[t] - \psi &=& \sum_{j\in\{i\}\cup M} a_i \, (w_j-\psi) \\
\nonumber
&\geq& a_i\, (w_j-\psi), ~~\forall j\in \{i\}\cup M  ~~~~~\mbox{\normalfont from (\ref{e_algo_1})}
\end{eqnarray}
For fault-free $j\in \{i\}\cup M$, $w_j=v_j[t-1]$, therefore,
\begin{eqnarray}
v_i[t] -\psi &\geq & a_i\, (v_j[t-1]-\psi)
\end{eqnarray}
\end{proof}

~

\begin{lemma}
\label{lemma:Psi}
Suppose that $\scriptf$ is the set of faulty nodes, and that $G(\scriptv,\scripte)$ satisfies the ``sufficient condition''
stated in Section \ref{sec:sufficiency}.
Consider fault-free node $i\in\scriptv-\scriptf$.
Let $\Psi\geq U[t-1]$. Then, for $j\in \{i\}\cup M$,
\[
\Psi - v_i[t] \geq  a_i ~ (\Psi - w_j)
\]
where $w_j$ is the value received by node $i$ from node $j$ in the $t$-th
iteration.
Specifically, for fault-free $j\in \{i\}\cup M$,
\[
\Psi - v_i[t] \geq  a_i ~ (\Psi - v_j[t-1])
\]
\end{lemma}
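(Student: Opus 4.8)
The plan is to mirror the proof of Lemma~\ref{lemma:psi} with all inequalities reversed, the role of $\mu[t-1]$ played by $U[t-1]$, and the set $B$ replaced by the set $T$ from the \emph{Update} step. The first step is to show that every value $w_j$ retained in the computation, that is, for $j\in\{i\}\cup M$, satisfies $w_j\leq U[t-1]$. If $j$ is fault-free (this includes $j=i$, since $i\in\scriptv-\scriptf$ by hypothesis), then $w_j=v_j[t-1]$, so $w_j\leq U[t-1]$ by the definition of $U[t-1]$. If $j$ is faulty, then $f\geq 1$, so Theorem~\ref{thm:nc} gives $|N_i^-|\geq 3f$ and hence $|T|=\lfloor|N_i^-|/3\rfloor\geq f$; since $j\in M$ is distinct from every node in $T$, the set $T\cup\{j\}$ contains at least $f+1$ distinct nodes, of which at most $f$ are faulty, so $T$ contains some fault-free node $k$. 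Because $k\in T$ and $j\in M$, we have $w_j\leq w_k=v_k[t-1]\leq U[t-1]$. In all cases $w_j\leq U[t-1]\leq\Psi$, so $\Psi-w_j\geq 0$ for every $j\in\{i\}\cup M$.

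The second step uses that the weights in (\ref{e_Z}) are all equal to $a_i$ and sum to $1$, which lets us write
\[
\Psi-v_i[t] ~=~ \sum_{j\in\{i\}\cup M} a_i\,(\Psi-w_j).
\]
Since each term of this sum is non-negative, each individual term is at most the whole sum, giving $\Psi-v_i[t]\geq a_i\,(\Psi-w_j)$ for every $j\in\{i\}\cup M$. For fault-free $j$, substituting $w_j=v_j[t-1]$ yields $\Psi-v_i[t]\geq a_i\,(\Psi-v_j[t-1])$, which is the second claimed inequality.

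The only step that requires genuine care is the faulty-$j$ case: it is essential to invoke the first part of Theorem~\ref{thm:nc} ($|N_i^-|\geq 3f$) to guarantee $|T|\geq f$, so that $T$ cannot consist entirely of values contributed by faulty nodes and must therefore include at least one fault-free node's state, which bounds $w_j$ from above by $U[t-1]$. Everything else is the exact symmetric counterpart of the argument in Lemma~\ref{lemma:psi} and relies only on the convexity of the update rule (\ref{e_Z}).
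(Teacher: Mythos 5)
Your proof is correct and is exactly the argument the paper intends: it mirrors the proof of Lemma~\ref{lemma:psi} with $U[t-1]$ in place of $\mu[t-1]$ and the set $T$ in place of $B$, which is all the paper's own (omitted) proof amounts to. Your handling of the faulty-$j$ case via $T\cup\{j\}$ is, if anything, slightly more explicit than the paper's corresponding step in Lemma~\ref{lemma:psi}.
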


\begin{proof}
The proof is similar to Lemma \ref{lemma:psi} proof.
\end{proof}

\paragraph{\bf\large Proof of Lemma 2}

\begin{proof}
Since $R$ propagates to $L$, as 
per Definition~\ref{def:absorb_sequence},
there exist sequences of sets
$R_0,R_1,\cdots,R_l$ and $L_0,L_1,\cdots,L_l$, where
\begin{itemize}
\item $R_0=R$, ~~ $L_0=L$, ~~ $R_l=R\cup L$, ~~ $L_l=\Phi$, ~~ for $0\leq \tau<l$, $L_\tau \neq \Phi$, and
\item for $0\leq \tau\leq l-1$,
\begin{list}{}{}
\item[*] $R_\tau\Rightarrow L_\tau$,
\item[*] $R_{\tau+1} = R_\tau\cup in(R_\tau\Rightarrow L_\tau)$, and
\item[*] $L_{\tau+1} = L_\tau - in(R_\tau\Rightarrow L_\tau)$
\end{list}
\end{itemize}
Let us define the following bounds on the states of the nodes
in $R$ at the end of the $s$-th iteration:
\begin{eqnarray}
X & = & max_{j\in R}~ v_j[s] \\ \label{e_M}
x & = & min_{j\in R}~ v_j[s] \label{e_m}
\end{eqnarray}
By the assumption in the statement of Lemma~\ref{lemma:bounded_value},
\begin{eqnarray}
X-x\leq \frac{U[s]-\mu[s]}{2} \label{e_M_m}
\end{eqnarray}
Also, $X\leq U[s]$ and $x\geq \mu[s]$.
Therefore, $U[s]-X\geq 0$ and $x-\mu[s]\geq 0$.

The remaining proof of Lemma~\ref{lemma:bounded_value} relies
on derivation of the three intermediate claims below. \\

\begin{claim}
\label{claim:1}
For $0\leq \tau\leq l$, for each node $i\in R_\tau$,
\begin{eqnarray}
v_i[s+\tau] - \mu[s] ~ \geq~   \alpha^{\tau}(x-\mu[s])
\label{e_ind_1}
\end{eqnarray}
\end{claim}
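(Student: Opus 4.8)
\textbf{Proof proposal for Claim~\ref{claim:1}.}
The plan is to prove~(\ref{e_ind_1}) by induction on $\tau$, letting Lemma~\ref{lemma:psi} do the heavy lifting at each step, always with the choice $\psi=\mu[s]$. This choice is legitimate when we invoke Lemma~\ref{lemma:psi} at iteration $s+\tau+1$, because Middle Algorithm satisfies validity (Theorem~\ref{thm:validity}), so $\mu[s]\le\mu[s+1]\le\cdots\le\mu[s+\tau]$, i.e. $\psi\le\mu[(s+\tau+1)-1]$. Note also that $x\ge\mu[s]$ (as $R\subseteq\scriptv-\scriptf$), so the right-hand side $\alpha^{\tau}(x-\mu[s])$ of~(\ref{e_ind_1}) is nonnegative throughout; this is what lets us multiply inequalities by $a_i\ge\alpha$ freely. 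For the base case $\tau=0$ we have $R_0=R$, and the definition of $x$ in~(\ref{e_m}) gives $v_i[s]\ge x$ for every $i\in R$, hence $v_i[s]-\mu[s]\ge x-\mu[s]=\alpha^{0}(x-\mu[s])$.

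For the inductive step, assume~(\ref{e_ind_1}) for some $\tau$ with $0\le\tau<l$ and fix $i\in R_{\tau+1}$. Since $R_{\tau+1}=R_\tau\cup in(R_\tau\Rightarrow L_\tau)$ is a disjoint union, there are two cases. If $i\in R_\tau$, then in~(\ref{e_Z}) node $i$ assigns weight $a_i$ to its own previous state $w_i=v_i[s+\tau]$, so the fault-free form of Lemma~\ref{lemma:psi} with $j=i$, together with $a_i\ge\alpha$ and the induction hypothesis, gives $v_i[s+\tau+1]-\mu[s]\ge a_i(v_i[s+\tau]-\mu[s])\ge\alpha\cdot\alpha^{\tau}(x-\mu[s])=\alpha^{\tau+1}(x-\mu[s])$. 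If instead $i\in in(R_\tau\Rightarrow L_\tau)$, then $i$ is fault-free (because $L_\tau\subseteq\scriptv-\scriptf$) and $|N_i^-\cap R_\tau|>|N_i^-|/3\ge\lfloor|N_i^-|/3\rfloor=|T|$, where $T$ is the top block discarded in node $i$'s \emph{Update} step at iteration $s+\tau+1$. All values $i$ receives from the (fault-free) nodes of $N_i^-\cap R_\tau$ equal their states $v_k[s+\tau]$ and are thus at least $y:=\min_{k\in N_i^-\cap R_\tau}v_k[s+\tau]$, and there are strictly more than $|T|$ of them. I would then argue that some node $k'\in M$ reports a value $w_{k'}\ge y$; combining with the induction hypothesis applied to each $k\in N_i^-\cap R_\tau\subseteq R_\tau$, which gives $y-\mu[s]\ge\alpha^{\tau}(x-\mu[s])$, Lemma~\ref{lemma:psi} with $j=k'$ yields $v_i[s+\tau+1]-\mu[s]\ge a_i(w_{k'}-\mu[s])\ge\alpha(y-\mu[s])\ge\alpha^{\tau+1}(x-\mu[s])$, completing the induction.

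The one step that needs care — the main obstacle — is the combinatorial claim in the second case: extracting, from the single inequality $|N_i^-\cap R_\tau|>|N_i^-|/3$, that a value at least $y$ survives the \emph{Update} step into the middle set $M$. The clean way is to sort the values $u_1\le u_2\le\cdots\le u_{|N_i^-|}$ of $r_i[s+\tau+1]$: since strictly more than $|T|=\lfloor|N_i^-|/3\rfloor$ of them are $\ge y$, and the entries $\ge y$ form a suffix of the sorted list, the entry at position $|N_i^-|-|T|$ satisfies $u_{|N_i^-|-|T|}\ge y$; and that position belongs to $M$ because $|M|=|N_i^-|-2\lfloor|N_i^-|/3\rfloor\ge1$. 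It is worth flagging that $k'$ need not be fault-free, but this is harmless: Lemma~\ref{lemma:psi} only requires $j\in\{i\}\cup M$ and the bound $w_j\ge\mu[s]$ (established inside its proof for every element of $\{i\}\cup M$). The remaining ingredients — the base case, the monotonicity of $\mu[\cdot]$ from validity, and bookkeeping with $a_i\ge\alpha$ and the nonnegativity of the relevant differences — are routine.
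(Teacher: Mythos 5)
Your proof is correct and follows essentially the same route as the paper: induction on $\tau$, Lemma~\ref{lemma:psi} with $\psi=\mu[s]$ (justified by validity), and a case split between $i\in R_\tau$ and $i\in in(R_\tau\Rightarrow L_\tau)$. The only difference is cosmetic: where the paper distinguishes whether some value from $N_i^-\cap R_\tau$ survives into $M$ or all such values are eliminated (and then notes $B$ and $T$ each contain a node of $N_i^-\cap R_\tau$, sandwiching $M$), you argue directly that, since more than $|T|=\lfloor|N_i^-|/3\rfloor$ received values are at least $y$, the sorted entry at position $|N_i^-|-|T|$ is at least $y$ and belongs to $M$ --- a clean, equivalent form of the same pigeonhole observation.
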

\noindent{\em Proof of Claim \ref{claim:1}:}
The proof is by induction.

\noindent
{\em Induction basis:}
By definition of $x$, (\ref{e_ind_1}) holds true
for $\tau=0$.

\noindent{\em Induction:}
Assume that (\ref{e_ind_1}) holds true for some $\tau$, $0\leq \tau<l$.
Consider $R_{\tau+1}$.
Observe that $R_\tau$ and $R_{\tau+1}-R_\tau$ form a partition of $R_{\tau+1}$;
let us consider each of these sets separately.
\begin{itemize}
\item Set $R_\tau$: By assumption, for each $i\in R_\tau$, (\ref{e_ind_1})
holds true.
By validity of Middle Algorithm (proved in Theorem \ref{thm:validity}), $\mu[s] \leq \mu[s+\tau]$.
Therefore, setting $\psi=\mu[s]$ and $t=s+\tau+1$ in Lemma~\ref{lemma:psi},
we get,
\begin{eqnarray*}
v_i[s+\tau+1] - \mu[s] & \geq 
& a_i~(v_i[s+\tau] - \mu[s]) \\
& \geq & a_i~ \alpha^{\tau}(x-\mu[s]) ~~~~ \mbox{due to (\ref{e_ind_1})} \\
& \geq & \alpha^{\tau+1}(x-\mu[s])  ~~~~~~ \mbox{due to (\ref{e_alpha})} \\
	&& \mbox{~~ and because~~~~} x-\mu[s]\geq 0 
\end{eqnarray*}

\item Set $R_{\tau+1}-R_\tau$: Consider a node $i\in R_{\tau+1}-R_\tau$. By definition of $R_{\tau+1}$, we have that $i\in in(R_\tau\Rightarrow L_\tau)$.
Thus,

\[ \frac{|N_i^- \cap R_\tau|}{|N_i^-|} > \frac{1}{3} \] 

In Middle Algorithm, values in sets $B$ and $T$ received by
node $i$ are eliminated before $v_i[s+\tau+1]$ is computed at
the end of $(s+\tau+1)$-th iteration. Consider two possibilities:
\begin{itemize}
\item Value received from one of the nodes in $N_i^- \cap R_\tau$ is
{\em not} eliminated. Suppose that this value is received from
fault-free node $p\in N_i^-\cap R_\tau$. Then, $p\in M$, and by an argument similar to the
previous case, we can set $\psi=\mu[s]$
in Lemma~\ref{lemma:psi}, to obtain,
\begin{eqnarray*}
v_i[s+\tau+1] -\mu[s] & \geq & a_i~(v_p[s+\tau]-\mu[s]) \\
& \geq & a_i~ \alpha^{\tau}(x-\mu[s]) ~~~~ \mbox{due to (\ref{e_ind_1})} \\
& \geq & \alpha^{\tau+1}(x-\mu[s])  ~~~~~~ \mbox{due to (\ref{e_alpha})} \\
	&& \mbox{and because~~~~} x-\mu[s]\geq 0 
\end{eqnarray*}

\item
Values received from {\em all} nodes in $N_i^- \cap R_\tau$ are eliminated.
Thus, $(N_i^-\cap R_\tau)\subseteq T\cup B$. Recall that $|N_i^- \cap R_\tau| > |N_i^-|/3\geq |B|=|T|$. Thus, $T$ and $B$
both must contain at least one node from
$N_i^-\cap R_\tau$.
Therefore, the values that are {\em not} eliminated -- that is, values received from nodes in $M$  -- are within the interval to which the values received from the nodes in $N_i^- \cap R_\tau$ belong. Thus, there exists a node $k$ (possibly faulty) in $M$ from whom node $i$ receives
some value $w_k$ -- which is not eliminated -- and
a fault-free node $p\in N_i^- \cap R_\tau$ such that

\begin{eqnarray}
v_p[s+\tau] &\leq & w_k \label{e_wk}
\end{eqnarray}
Then by setting $\psi=\mu[s]$ and $t=s+\tau+1$ in Lemma~\ref{lemma:psi}, we have
\begin{eqnarray*}
v_i[s+\tau+1] -\mu[s] & \geq & a_i~(w_k -\mu[s]) \\
& \geq & a_i~(v_p[s+\tau]-\mu[s])  \mbox{~~~by (\ref{e_wk})} \\
& \geq & a_i~ \alpha^{\tau}(x-\mu[s]) ~~~~ \mbox{due to (\ref{e_ind_1})} \\
& \geq & \alpha^{\tau+1}(x-\mu[s])  ~~~~~~ \mbox{due to (\ref{e_alpha})} \\
	&& \mbox{and because~~~~} x-\mu[s]\geq 0 
\end{eqnarray*}
\end{itemize}
\end{itemize}
Thus, we have shown that for all nodes in $R_{\tau+1}$,
\[
v_i[s+\tau+1] -\mu[s] 
\geq \alpha^{\tau+1}(x-\mu[s]) 
\]
This completes the proof of Claim \ref{claim:1}. \\

\begin{claim}
\label{claim:2}
For each node $i\in \scriptv-\scriptf$,
\begin{eqnarray}
v_i[s+l] - \mu[s] ~ \geq ~  \alpha^{l}(x-\mu[s])
\label{e_ind_2}
\end{eqnarray}
\end{claim}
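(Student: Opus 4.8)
The plan is to obtain Claim~\ref{claim:2} as the terminal instance $\tau=l$ of Claim~\ref{claim:1}, which has just been proved for every index $\tau$ with $0\le\tau\le l$. So the only work is to instantiate Claim~\ref{claim:1} at $\tau=l$ and then recognize that the index set $R_l$ is exactly the set of all fault-free nodes.

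The key (and essentially only substantive) step is the bookkeeping identity $R_l=\scriptv-\scriptf$. By Definition~\ref{def:absorb_sequence}, the propagating sequences $R_0,\dots,R_l$ and $L_0,\dots,L_l$ fixed at the start of the proof of Lemma~\ref{lemma:bounded_value} satisfy $R_l=R\cup L$ (and $L_l=\Phi$). By hypothesis of Lemma~\ref{lemma:bounded_value}, the sets $R$ and $L$ partition $\scriptv-\scriptf$, so $R\cup L=\scriptv-\scriptf$ and hence $R_l=\scriptv-\scriptf$. Consequently every fault-free node $i$ lies in $R_l$, and applying Claim~\ref{claim:1} with $\tau=l$ gives
\[
v_i[s+l]-\mu[s]~\ge~\alpha^{l}\,(x-\mu[s]),
\]
which is exactly the assertion of Claim~\ref{claim:2}.

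There is no real obstacle in this step itself; all the difficulty lies upstream, in the inductive proof of Claim~\ref{claim:1} (which relies on Lemma~\ref{lemma:psi}, on the validity of the {\em Middle} algorithm so that $\mu[s]\le\mu[s+\tau]$, and on the counting fact $|N_i^-\cap R_\tau|>|N_i^-|/3\ge|B|=|T|$ forcing a surviving value at each newly absorbed node to dominate some fault-free value from $R_\tau$) and, further back, in Lemma~\ref{lemma:must_absorb}, which guarantees that the propagating sequence exists and reaches $L_l=\Phi$ within at most $n-1$ steps. The one thing to be careful about is not to conflate $R_l$ (which is everything) with $R$ (the original seed set). Finally, I would note that the mirror-image argument, obtained by running the same propagating sequence but invoking Lemma~\ref{lemma:Psi} in place of Lemma~\ref{lemma:psi} and $\Psi=U[s]$ in place of $\psi=\mu[s]$, yields the complementary estimate $U[s]-v_i[s+l]\ge\alpha^{l}(U[s]-X)$ for every fault-free $i$; adding this to Claim~\ref{claim:2} and using $X-x\le(U[s]-\mu[s])/2$ produces the bound (\ref{e:convergence:1}) and completes the proof of Lemma~\ref{lemma:bounded_value}.
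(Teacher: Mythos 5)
Your proof is correct and follows exactly the paper's own argument: instantiate Claim~\ref{claim:1} at $\tau=l$ and observe that $R_l=R\cup L=\scriptv-\scriptf$ because $R,L$ partition the fault-free nodes. The additional bookkeeping you spell out (why $R_l$ is all of $\scriptv-\scriptf$) is exactly what the paper leaves implicit, so there is nothing to correct.
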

\noindent{\em Proof of Claim \ref{claim:2}:}
Note that by definition, $R_l = \scriptv-\scriptf$. Then the proof follows by setting $\tau = l$ in the above Claim \ref{claim:1}. \\

\begin{claim}
\label{claim:3}
For each node $i\in \scriptv-\scriptf$,
\begin{eqnarray}
U[s] - v_i[s+l] \geq  \alpha^{l}(U[s]-X)
\label{e_ind_3a}
\end{eqnarray}
\end{claim}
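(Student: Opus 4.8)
The plan is to prove Claim~\ref{claim:3} by mirroring the induction used in Claim~\ref{claim:1}, but now working from the top of the interval downward. Specifically, I would first establish the analogue of Claim~\ref{claim:1}: for $0\leq\tau\leq l$ and each node $i\in R_\tau$,
\[
U[s] - v_i[s+\tau] ~\geq~ \alpha^\tau (U[s]-X).
\]
The induction basis ($\tau=0$) holds because every node $j\in R_0=R$ has $v_j[s]\leq X$, so $U[s]-v_j[s]\geq U[s]-X$, and $\alpha^0=1$. For the inductive step, I would partition $R_{\tau+1}$ into $R_\tau$ and $R_{\tau+1}-R_\tau$ exactly as before. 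For nodes already in $R_\tau$, apply Lemma~\ref{lemma:Psi} with $\Psi=U[s]$ and $t=s+\tau+1$ (legitimate because validity of Middle Algorithm gives $U[s+\tau]\leq U[s]$, so $\Psi\geq U[s+\tau]\geq U[s+\tau+1-1]$), yielding $U[s]-v_i[s+\tau+1]\geq a_i(U[s]-v_i[s+\tau])\geq a_i\alpha^\tau(U[s]-X)\geq \alpha^{\tau+1}(U[s]-X)$, using $U[s]-X\geq 0$ and $a_i\geq\alpha$.

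For a node $i\in R_{\tau+1}-R_\tau$, i.e. $i\in in(R_\tau\Rightarrow L_\tau)$, we have $|N_i^-\cap R_\tau|/|N_i^-|>1/3\geq |B|=|T|$, so after removing sets $B$ and $T$ at node $i$ there is still a value surviving in $M$ that is bracketed by values received from nodes in $N_i^-\cap R_\tau$; in particular there is a surviving value $w_k$ (from some possibly-faulty $k\in M$) and a fault-free $p\in N_i^-\cap R_\tau$ with $w_k\leq v_p[s+\tau]$ (this is the downward-facing counterpart of~(\ref{e_wk}); if some value from $N_i^-\cap R_\tau$ itself survives, take $w_k=v_p[s+\tau]$ directly). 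Then Lemma~\ref{lemma:Psi} with $\psi$ replaced by $\Psi=U[s]$ gives $U[s]-v_i[s+\tau+1]\geq a_i(U[s]-w_k)\geq a_i(U[s]-v_p[s+\tau])\geq a_i\alpha^\tau(U[s]-X)\geq\alpha^{\tau+1}(U[s]-X)$, where the third inequality is the inductive hypothesis applied to $p\in R_\tau$. This completes the induction. Finally, since $R_l=\scriptv-\scriptf$, setting $\tau=l$ gives~(\ref{e_ind_3a}) for every fault-free node, proving Claim~\ref{claim:3}.

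The only real subtlety — and the step I'd expect to write most carefully — is the bracketing argument for $i\in R_{\tau+1}-R_\tau$: one must argue that because more than a third of $i$'s incoming values come from $R_\tau$ while $B$ and $T$ each drop only $\lfloor|N_i^-|/3\rfloor$ of them, at least one $R_\tau$-value lands in $B$ and at least one in $T$, so every surviving value (those from $M$) is sandwiched between two $R_\tau$-values, and in particular lies at or below the \emph{largest} such value, which by the inductive hypothesis is at most $U[s]-\alpha^\tau(U[s]-X)$ away from $U[s]$ from below — equivalently, some surviving $w_k$ satisfies $w_k\leq v_p[s+\tau]$ for a fault-free $p\in R_\tau$. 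Everything else is a routine sign-flipped transcription of the proof of Claim~\ref{claim:1}, using Lemma~\ref{lemma:Psi} in place of Lemma~\ref{lemma:psi}. Once Claims~\ref{claim:2} and~\ref{claim:3} are both in hand, adding them (Claim~\ref{claim:2} gives $v_i[s+l]\geq\mu[s]+\alpha^l(x-\mu[s])$ as a lower bound on $\mu[s+l]$, Claim~\ref{claim:3} gives $v_i[s+l]\leq U[s]-\alpha^l(U[s]-X)$ as an upper bound on $U[s+l]$) yields $U[s+l]-\mu[s+l]\leq (U[s]-\mu[s])-\alpha^l(U[s]-X+x-\mu[s])$, and then~(\ref{e_M_m}) together with $U[s]-X\geq 0$, $x-\mu[s]\geq 0$ gives $U[s]-X+x-\mu[s]\geq (U[s]-\mu[s])-(X-x)\geq (U[s]-\mu[s])/2$, which delivers~(\ref{e:convergence:1}).
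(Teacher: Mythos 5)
Your proposal is correct and matches the paper's intended argument: the paper simply remarks that Claim~\ref{claim:3} is proved like Claim~\ref{claim:2}, i.e.\ by repeating the induction of Claim~\ref{claim:1} with Lemma~\ref{lemma:Psi} (taking $\Psi=U[s]$) in place of Lemma~\ref{lemma:psi}, including the same bracketing argument that some surviving value $w_k\leq v_p[s+\tau]$ for a fault-free $p\in N_i^-\cap R_\tau$, and then setting $\tau=l$. Aside from the cosmetic slip ``$>1/3\geq|B|$'' (you mean $|N_i^-\cap R_\tau|>|N_i^-|/3\geq\lfloor|N_i^-|/3\rfloor=|B|=|T|$), the write-up is sound.
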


The proof of Claim \ref{claim:3} is similar to the proof of Claim \ref{claim:2}. \\

\noindent
Now let us resume the proof of the Lemma \ref{lemma:bounded_value}.
Thus, 
\begin{eqnarray}
U[s+l] & = & \max_{i\in\scriptv-\scriptf}~ v_i[s+l] \nonumber \\
& \leq & U[s] - \alpha^{l}(U[s]-X) \mbox{~~~~~~~by (\ref{e_ind_3a})}
\label{e_U}
\end{eqnarray}
and
\begin{eqnarray}
\mu[s+l] & = & \min_{i\in\scriptv-\scriptf}~ v_i[s+l] \nonumber \\
& \geq & \mu[s] + \alpha^{l}(x-\mu[s]) \mbox{~~~~~~~by (\ref{e_ind_2}})
\label{e_mu}
\end{eqnarray}
Subtracting (\ref{e_mu}) from (\ref{e_U}),
\begin{eqnarray}
&& U[s+l]-\mu[s+l] \nonumber \\  & \leq & U[s] - \alpha^{l}(U[s]-X)  - \mu[s] - \alpha^{l}(x-\mu[s]) \nonumber \\
&=& (1-\alpha^l)(U[s]-\mu[s]) + \alpha^l(X-x) \nonumber \\
&\leq& (1-\alpha^l)(U[s]-\mu[s]) + \alpha^l~\frac{U[s]-\mu[s]}{2} \nonumber
 \mbox{~~~~by (\ref{e_M_m})} \nonumber \\
&\leq& (1-\frac{\alpha^l}{2})(U[s]-\mu[s])  \nonumber
\end{eqnarray}
This concludes the proof of Lemma~\ref{lemma:bounded_value}.
\end{proof}

\end{document}